%% file: preprint.tex
\documentclass[twoside]{article}

\usepackage[preprint]{aistats2026}
\usepackage{amsmath}
\usepackage{amsthm}
\usepackage{xparse}
\usepackage{nicefrac}
\usepackage{algorithmic}
\usepackage{lineno}
\usepackage{subcaption}
\usepackage{graphicx}
\usepackage[T1]{fontenc}
\usepackage{tikz}
\usetikzlibrary{arrows.meta, positioning, shapes.geometric}
\usetikzlibrary{calc}
\usepackage{algorithm}
\input{commands}
\usepackage{etoolbox}

\newcommand{\pow}[1]{^{(#1)}}
\newcommand{\lp}{\left(}
\newcommand{\rp}{\right)}
\newcommand{\lc}{\left\{}
\newcommand{\rc}{\right\}}
\newcommand{\lb}{\left[}
\newcommand{\rb}{\right]}
\newcommand{\Cd}{\kappa_n^\dagger}

\usepackage[round]{natbib}

\begin{document}

%
\runningtitle{Multi Change Point Detection for Markov Chains}

%

\twocolumn[

\aistatstitle{Nonparametric Multi Change Point Detection for Markov Chains via \\ Adaptive Clustering}

\aistatsauthor{ Imon Banerjee \And Jiaqi Lei \And  Sanjay Mehrotra }

\aistatsaddress{ Northwestern University \And  Northwestern University \And Northwestern University } ]

\begin{abstract}
  Offline change point detection tries to detect \textit{time points} of distribution change in a given data sequence; and is now routinely used in signal processing, speech processing, climatology etc. Despite this broad applicability across economics, computer science, and planetary sciences, rigorous, nonparametric techniques for change point detection with non-independent and identically distributed (i.i.d.) datasets has remained elusive. This paper establishes such guarantees by proposing a non-parametric clustering algorithm which can accurately obtain the change points from a given Markovian dataset of length $n$. It does so by bridging together two different components of mathematical statistics; Rademacher complexities of Markov chains, and adaptive clustering via penalisation. Our first result uses recent advances in Rademacher complexities of regenerating Markov chains to derive a Dvoretzky Kiefer Wolfowitz (DKW) type inequality for the empirical distribution of the Markov chain. We then use this to show that an adaptive clustering algorithm recovers the correct change points for a Markovian sequence. We establish the tightness of our rates by showing that they essentially coincide with the best known rates for i.i.d. data. We end the paper by discussing the computational considerations of the problem.   
\end{abstract}

\section{INTRODUCTION}\label{sec:introduction}

Change point analysis---due first to the seminal work of Page \citep{Page1954,Page1955}---is a well-established area of study that focuses on identifying points within a data sequence where significant structural changes occur. Yet traditional methods suffer from in two key aspects: (i) the theoretical properties for such methods are most studied when the number of changes are known (and typically a single change), and (ii) the relevant method enjoy the strongest guarantees in the i.i.d. setting, with some oracle based online techniques \citep{bansal1986algorithm,lai1998information,banerjee2012data} used in the dependent-offline context through restarts; whereas most industrial applications (like process engineering, climate dynamics etc.) have dependent (and particularly, Markovian) data streams. This paper addresses both of these shortcomings; we show that without prior knowledge about the number of change points, minimising the \textit{penalised} inter-cluster variance leads to detecting the correct number of change points for Markovian data-sets. 

To set the stage, we introduce some notation. Let $X_1,\dots,X_n$ be samples from a real valued Markov chain on $[0,1]$. Let $\tau_1<\tau_2<\dots <\tau_{K_n}\in\{1,\dots,n\}$ be the true set of change points, with $F_1,\dots,F_{K_n+1}$ being the corresponding stationary distributions. For any given $L$ and any sequence of estimated change points $\tau_1'<\tau_2'<\dots <\tau_{L}'$, the empirical counterparts of $F_i$ is given by $\hat F_{\tau_{i-1}}^{\tau_{i}}(u):=\sum_{j=\tau_{i-1}}^{\tau_i}\indicator[X_j\leq u]/(\tau_i-\tau_{i-1})$. Note that, $K_n$ is allowed to increase with $n$. The risk function we consider is total clustering variance (defined formally in \S \ref{sec:model})
\begin{small}
  \begin{align*}
    \int \sum_{i=1}^L (\tau_i-\tau_{i-1}) \hat F_{\tau_{i-1}}^{\tau_i}(u)\lp1-\hat F_{\tau_{i-1}}^{\tau_i}(u)\rp d\hat F_0^n(u).\numberthis\label{eq:cluster_var}
\end{align*}  
\end{small}
To motivate this risk function, consider its population analogue. As $n\to\infty$, let $\tau_i/n\approx \theta_i$ denote the true (scaled) change points, and let $\hat\theta_i$ be their estimates. Suppose each observation $X_j$ is independently assigned to cluster $i$, that is, drawn from $F_i(u)$ over the interval $(\hat\theta_{i-1},\hat\theta_i]$. Let $\mathcal{X}_i=\#\{X_j: j\in(\hat\theta_{i-1},\hat\theta_i]\}$. Observe that $\Xcal_i$ counts the number of samples in cluster $i$ and is distributed as
\[
\mathcal{X}_i \sim \mathrm{Binomial}(n(\hat\theta_i-\hat\theta_{i-1}),F_i(u)),
\]
Then,
\[
\Var(\mathcal{X}_i)=n(\hat\theta_i-\hat\theta_{i-1})F_i(u)(1-F_i(u)).
\]
Summing across clusters yields
\[
\sum_i \Var(\mathcal{X}_i)=\sum_i n(\hat\theta_i-\hat\theta_{i-1})F_i(u)(1-F_i(u)),
\]
which is precisely the population counterpart of \cref{eq:cluster_var}.


We can understand why the cluster variance is a good metric by looking at the population counterpart. Assume that there is only a single change point $\tau_1$ such that $\tau_1/n\approx\theta$. Now suppose the candidate for the estimated change point is $\tau_1'$ such that $\tau_1'/n\approx \hat \theta$, with the population counterpart of the estimate of $F_1$ being $F_{1,\hat\theta}$ and $F_2$ being $F_{2,\hat\theta}$ as below: 
\begin{small}
\begin{align*}
    F_{1,\hat\theta} & = \dfrac{\min{}(\hat\theta,\theta)F_1(u)+\max{}(\hat\theta-\theta,0)F_2(u) }{\min{}(\theta_0,\theta)+\max{}(\hat\theta-\theta,0)}\qquad \text{ and }\\
    F_{2,\hat\theta} & = \dfrac{\max{}(\theta-\hat\theta,0)F_1(u)+\min{}(1-\hat\theta,1-\theta)F_2(u)}{\max{}(\theta-\hat\theta,0)+\min{}(1-\hat\theta,1-\theta)}.
\end{align*}    
\end{small}
It can then be verified that for each $u$, the clustering variance
\begin{small}
\begin{align*}
    \hat \theta F_{1,\hat \theta} (u)\lp 1-F_{1,\hat \theta} (u)\rp + (1-\hat \theta) F_{2,\hat \theta} (u)\lp 1-F_{2,\hat \theta} (u)\rp
\end{align*}    
\end{small}
decreases as $\hat \theta\rightarrow\theta$ from both sides. 

The pathway towards proving the consistency of the previous mechanism hinges on two main results: (i) A Dvoretzky Kiefer Wolfowitz inequality for regenerating Markov chains (Theorem \ref{prop:talagrand-scaled}), and (ii) the monotonicity of the risk function with respect to spurious detection (Lemma \ref{lemma:monotne}). The proof of the former was by using recent advances in computing the Rademacher complexities of regenerating Markov chains, whereas the second is a consequence of a careful analysis of the risk function (as defined below in \cref{eq:risk}). We discuss ways to further improve our results, which leads to an open question in the Poissonian concentration of Markov chains (further details in \S\ref{sec:nummelin-splitting}).

As mentioned above, change point detection has typically been seen through the lenses of hypothesis testing, based on Kolmogorov-Smirnoff (KS), Cram\'er-von Mises, or Anderson-Darling tests. But in the multi-change point detection, this reveals additional pathologies. A crucial assumption in existing literature attempting to use KS \citep{padilla_optimal_2021} as a baseline for change point detection is access to multiple data streams. In fact, a close inspection of Theorem 3.1 (and in particular equation 3.2) \cite{padilla_optimal_2021} reveals a condition on a tuning parameter, which, in absence of multiple data streams, becomes vacuous, by having a lower bound that is increasing and an upper bound that is an universal constant. 


Finally, we extensively discuss the computational considerations of our method; providing two distinct mixed-integer binary formulations which all recover the correct number of solutions, as opposed to the standing baseline in the field---PELT\citep{killick2012optimal}---which consistently overestimates the correct number of change points.

We now move on to briefly mention our key contributions in this paper. 

\begin{itemize}
    \item \textbf{DKW inequality for regerating Markov chains.} We establish a Hoeffding-style tail inequality for the suprema of the empirical distribution (Dvoretzky Kiefer Wolfowitz inequality) in Theorem \ref{prop:talagrand-scaled}. Corollary \ref{corollary:DKW} establishes that this inequality is essentially sharp; and devolves to the regular DKW inequality for empirical processes for i.i.d. data (up to $\log$ terms). Along the way we also discuss the question of a Bennett-style inequality, which to the best of our knowledge was first identified in 2000 \citep{samson_concentration_2000} but has remained open since, and its implications in this paper (see \S\ref{sec:nummelin-splitting} for more details).

    \item \textbf{Consistent multi-change point detection for Markovian data.} Exploiting Theorem \ref{prop:talagrand-scaled}, we establish the first offline change point detection mechanism for regenerating Markov chains. Our rates are tight, in the sense that when the data is i.i.d., they match the known rates in literature (more details in \S\ref{sec:change-point}).

    \item \textbf{Computational considerations.} 
    We illustrate the effectiveness of the proposed method by comparing it with a common baseline: the Pruned Exact Linear Time (PELT) method \citep{killick2012optimal}. Using simulated nonstationary Markov chain data, we evaluate runtime and detection accuracy of different methods.(see \S\ref{simulated_data}).
\end{itemize}

\textit{The rest of the paper is organised as follows:} \S \ref{sec:background} outlines a comprehensive discussion of relevant research work. We formally introduce the model and relevant notations in \S \ref{sec:model}. \S \ref{sec:main_results} hold our key theorems and the sketches of their proofs, while the full proofs have been deferred to appendix due to lack of space. The computational considerations are housed in \S \ref{sec:computation}, and finally, in \S \ref{sec:conclusions}, we discuss the broader impacts and the future outlooks of our work.


\section{BACKGROUND AND RELATED RESEARCH}\label{sec:background}

Broadly speaking, change point detection methods can be categorised as \emph{online} (detecting changes in real‑time) or \emph{offline} (segmenting retrospectively once all data are observed), the latter of which is the focus of this paper. Online methods prioritize early detection, while offline methods typically aim to identify multiple changes simultaneously.

The earliest contribution on change point detection in the Gaussian setting are due to \cite{Page1954,Page1955}. Since then, change point methodology has been applied in speech processing \citep{desobry2005online,Harchaoui2009}, finance \citep{Bai1998a,Frick2014}, bioinformatics \citep{Hocking2013,Maidstone2017}, climatology \citep{maidstone2016efficient,Verbesselt2010}, and network traffic analysis \citep{Levy-Leduc2009,Lung-Yut-Fong2012}, among others. We point the reader to standard references like the monographs \cite{Basseville1993,Darkhovsky1993,Csorgo1997,chen2011parametric} and the surveys \cite{truong_selective_2020,Lavielle2007,Jandhyala2013,Haynes2017} for more details on the theory and methods of change point analysis.

Reliable detection of multiple changes is a comparatively recent development. A key contribution is \cite{zou_nonparametric_2014} (see also \cite{fryzlewicz2014}), who proposed a nonparametric maximum likelihood method with BIC-based model selection, establishing consistency and optimal rates. More recently, \cite{padilla_optimal_2021,madrid_padilla_optimal_2022} analyzed univariate and multivariate data using the Kolmogorov-Smirnov (KS) distance, showing that wild binary segmentation achieves nearly minimax rate-optimality in multi-stream settings.

However, possibly due to the absence of suitable mathematical tools \citep{bertail_rademacher_2019}, the problem of offline nonparametric multiple change point detection for Markov chains remains open, with limited progress under parametric settings such as time-series models \citep{Fryzlewicz2017}. We get the following open question.
\paragraph{Open Question.} Is it possible to design a \emph{nonparametric change point detection method for Markov chains} on general state spaces with mild assumptions, and achieving optimal convergence?
\begin{remark}
The choice of a nonparametric approach in this paper is motivated primarily due to the fact that apart from a few developments in time-series and queueing contexts, parametric models for Markov chains remain largely undeveloped.
\end{remark}
\section{PROBLEM STATEMENT}\label{sec:model}
\newcommand{\argmin}{\operatorname{argmin}}
\newcommand{\argmax}{\operatorname{argmax}}
In order to formalize our results, we introduce some notations. A $\sigma$-algebra $\Ecal:=\sigma(E)$ of a set $E$ is said to be countably generated if there exists a countable set $\Ccal$ such that $\Ecal\subseteq\sigma(\Ccal)$. Then, we call $(E,\Ecal)$ to be a countably generated state space. Our choice of $(E,\Ecal)=([0,1],\Bcal_{[0,1]})$, with $\Bcal_{[0,1]}$ denoting the Borel $\sigma$-algebra. In particular, $([0,1],\Bcal_{[0,1]})$ is countably generated. We define the Orlicz norm of a random variable $X$ as $\|X\|_{\psi_1}=\argmin  \{\lambda>0 : \expec\lb e^{(X/\lambda)}\rb\leq 1\}$.

\textbf{Data generating process.} Let us assume that we have data that is a sample from Markov chains, defined on a continuous state space $[0,1]$, that change at $K_n$ (unknown) time points. The transition densities and invariant distributions of these Markov chains are denoted as $\density_1,\density_2,\dots,\density_{{K_n+1}}$;  $F_1,F_2,\dots,F_{K_n+1}$, respectively.  We have $X^1_{1},\dots,X^1_{n_1}$ samples from the first transition density, $X^2_{1},\dots,X^2_{n_2}$ samples from the second transition density, and so on. In the offline change point detection problem, we do not have access to $n_1$, $n_2$, etc. and observe our data as a single sample $X_1,\dots,X_n$ with $n=n_1+n_2+\dots$, and
\begin{equation}
    \begin{split}
        X_1=X_1^1,X_2=X_2^1,\dots, X_{n_1}=X_{n_1}^1, \\
        X_{n_1+1}=X_1^2,\dots,X_{n_1+n_2}=X_{n_2}^2,\dots.
    \end{split}\label{eq:triangular}   
\end{equation}

With $\tau_1=n_1,\tau_2=n_1+n_2,\dots$ as the reparametrizations of the true (but unknown) change points, our problem then is to estimate the $\tau_i$'s from a single data-stream $X_1,\dots,X_n$. We recall from the introduction that $\tau_1<\tau_2<\dots <\tau_{K_n}$ is the ordered set of true change points, and remind the readers that $K_n$ is allowed to increase with $n$. The empirical counterparts of $F_i$ is given by $\hat F_{\tau_{i-1}}^{\tau_{i}}(u):=\sum_{j=\tau_{i-1}}^{\tau_i}\indicator[X_j\leq u]/(\tau_i-\tau_{i-1})$ and let $\hat F_n(u):=\sum_{i=1}^n\indicator[X_i\leq u]/n$ the empirical distribution of the whole sample.

Let $L$ be an integer estimating the number of change points with the candidate change points being $\tau_1',\dots,\tau_L'$. For any $u\in[0,1]$ and $\tau_i'<\tau_j'$, we define $\hat F_{\tau_i'}^{\tau_j'}:=\sum_{p=\tau_i'}^{\tau_j'}\indicator[X_p\leq u]/{(\tau_j'-\tau_i')}$ as the empirical distribution between $\tau_i'$ and $\tau_j'$. By $X_{(1)},\dots, X_{(n)}$, we denote the order statistics of $X_1,\dots,X_n$.

Our choice of loss metric is the nonparametric clustering variance defined as: 
\begin{small}
  \begin{align}
&R_n(\tau_1',\dots,\tau_L')\nonumber\\ 
    &= \sum_{i=0}^{L-1}(\tau_{i+1}'-\tau_{i}') \int_{X_{(1)}}^{X_{(n)}} \hat{F}_{\tau'_i}^{\tau'_{i+1}}(u)(1-\hat{F}_{\tau'_i}^{\tau'_{i+1}}(u))d\hat F_n(u)\label{eq:risk}
\end{align}  
\end{small}
When $\tau_1',\dots,\tau_L'$ is not ordered, then 
\begin{align*}
    R_n(\tau_1',\dots,\tau_L') := R_n\lp\tau_{(1)}',\dots,\tau_{(L)}'\rp\numberthis\label{eq:risk_unordered}
\end{align*}
where $\tau_{(1)}'<\tau_{(2)}'<\dots<\tau_{(L)}'$ is the order statistics for $\tau_1',\dots,\tau_L'$.

Our objective is the risk penalised by the Bayesian information criterion
\begin{align*}
    \mathrm{BIC_L} := \min_{\tau_1'<\dots<\tau_L'} R_n(\tau_1',\dots,\tau_L') + L\zeta_n,\tag{BIC} \label{eq:bic}
\end{align*}
where $\zeta_n$ is a suitably increasing sequence given explicitly in Theorem \ref{thm:marginal}. Our choice of estimators will be 
\begin{align*}
    \hat K_N & := \argmin_L \mathrm{BIC_L} \qquad \text{ and } \\\Gcal_n(L) & :=(\hat \tau_1,\dots\hat \tau_{L})=\argmin_{\tau_1'<\dots<\tau_L'} R_n(\tau_1',\dots,\tau_L'),
\end{align*}
with $\Gcal_n(\hat K_n)$ being the final estimated change points.
\begin{remark}
    Observe that solving \cref{eq:bic} by enumeration is NP-hard. In \S \ref{sec:computation}, we provide a mixed integer program to minimise this objective.
\end{remark}

Theorem \ref{thm:marginal}, proves that optimising the previous objective function achieves the optimal rate of detecting change points. 

Before proceeding we briefly justify our choice of integrating measure $d\hat F_n(u)$ in \cref{eq:risk}. Prior work \citep{zhang_powerful_2002,zhang_powerful_2006} highlights the empirical benefit of using weighted measures. In nonparametric change-point analysis, $d\hat F_n(u)$ is effective when adjacent distributions differ near their medians. However, as noted by \cite{zou_nonparametric_2014} in the i.i.d.\ setting, it can underperform when differences lie in the tails, since little information is captured in the integral. In such cases, using 
\[
\frac{d\hat F_n(u)}{\hat F_n(u)(1-\hat F_n(u))}
\]
offers greater detection power. This improvement, however, relies on a Poissonian concentration inequality for i.i.d.\ processes \citep{wellner_limit_1978}, with no clear analogue in the Markovian case. Deriving a Markovian counterpart to Lemma~1 of \cite{wellner_limit_1978} appears infeasible without new tools, so we instead adopt the unweighted integrating measure which is amenable to the DKW inequality. We emphasize that such weighting only improves empirical power: Theorem~\ref{thm:marginal} already guarantees the asymptotically optimal rate. We now turn to the formal statements of our main results.





\section{THEORETICAL RESULTS}\label{sec:main_results}

As stated in the introduction, our first objective will be to provide a DKW inequality for the empirical suprema of a regenerating Markov chains. To that end, we provide a brief introduction to regenerating Markov chains. 
 
 \subsection{DKW Inequality for Regenerating Markov chains}\label{sec:nummelin-splitting}


We give the following definition of atomic $\psi$-irreducible Markov chain (we refer the readers to  \cite[page 89]{meyn_markov_2012} for the formal definition of $\psi$-irreducibility).

\begin{definition}[Regenerative/Atomic chain]\label{def:atomic}
A \(\psi\)-irreducible, aperiodic Markov chain \(X\) with a transition probability distribution $P(\cdot,\cdot)$ is \emph{regenerative} (or \emph{atomic}) if there exists a measurable set \(A\) (the \emph{atom}) with \(\Psi(A)>0\) (for some measure $\Psi$) such that
\[
P(x,\cdot)=P(y,\cdot),\qquad \forall\,x,y\in A.
\]
\begin{remark}
    The set $A$ is called the $\Psi$-atom. In chains with finitely many states any single state may serve as an atom.
\end{remark}
\end{definition}
 Intuitively, $\psi$-irredicibility extends the classical notion of irreducibility for finite state Markov chains to infinite state spaces, whereas $\Psi$-atoms are sets from which the transitions behave homogeneously. That is, the probability of transition to any set is equal for any two starting points inside an atom. For finite state Markov chains, the atoms are the individual states (i.e. singletons). Informally, a regenerating Markov chain is a $\psi$-irreducible Markov chain which has at least one $\Psi$-atom that is repeatedly visited (with the inter-arrival times termed as the regeneration time). Conditions on the moment generating function of the regeneration time characterises the ergodic properties of the Markov chain as can be seen in Assumption \ref{assume:regenerate} below.

We now formalise the previous intuition. Extend the sample space by introducing a sequence $(Y_m)_{m\in\mathbb{N}}$ of independent Bernoulli random variables with success probability $\delta$. Our construction relies on a mixture representation of the transition kernel on a set $S$: 
\[
P(x,S)=\delta\,\Psi(S)+(1-\delta)\frac{P(x,S)-\delta\,\Psi(S)}{1-\delta},
\]
where the first term is independent of the starting point. The validity of this construction is ensured by the existence of the atom and we refer to \cite[Chapter 4]{meyn_markov_2012} for more details. This independence is key since it guarantees regeneration when that component is selected. In other words, each time the chain visits $S$, we randomly reassign the transition probability $P$ as follows:
\begin{itemize}
    \item If $X_m\in S$ and $Y_m=1$ (which occurs with probability $\delta\in(0,1)$), then the next state $X_{m+1}$ is generated according to the measure $\Psi$.
    \item If $X_m\in S$ and $Y_m=0$ (with probability $1-\delta$), then $X_{m+1}$ is drawn from the probability measure 
    \[
    (1-\delta)^{-1}\Bigl(P(X_m,\cdot)-\delta\,\Psi(\cdot)\Bigr).
    \]
\end{itemize}

The resulting bivariate process
\[
Z_m=(X_m,Y_m),
\]
known as the \textit{split chain}, takes values in $E\times\{0,1\}$ and is itself atomic, with the atom defined as
\[
A=S\times\{1\}.
\]
We then define the \textit{regeneration times} recursively by setting
\[
\rho_{A}(1)=\inf\{m\geq 1: Z_{m}\in A\},
\]
and for $j\geq 2$,
\[
\rho_{A}(j)=\inf\{m>\rho_{A}(j-1): Z_{m}\in A\}.
\]

It is well known that the split chain $Z$ inherits aperiodicity and $\psi$-irreducibility from the original chain $X$. Furthermore, by the recurrence property, the regeneration times have finite expectation; that is, one has \citep[Lemma A1]{AzaisDelyonPortier2016}
\[
\sup_{z\in A}\mathbb{E}_{z}[\rho_{A}(j)]<\infty\quad\text{and}\quad \mathbb{E}_{\nu}[\rho_{A}(j)]<\infty
\]
for any initial measure $\nu$ on $A$ and any integer $j$.

Regeneration theory \citep{meyn_markov_2012} shows that, given the sequence $(\rho_{A}(j))_{j\geq1}$, the sample path can be divided into blocks (or cycles) defined by
\[
B_{j}=(X_{\rho_{A}(j)+1},\dots,X_{\rho_{A}(j+1)}),\quad j\geq2,
\]
corresponding to successive visits to the regeneration set $A$. The strong Markov property then ensures that both the regeneration times and the blocks $\{B_{j}\}_{j\geq1}$ form independent and identically distributed (i.i.d.) sequences \citep[Chapter 13]{meyn_markov_2012}. Moreover, with $\mathbb{E}_{A}$ to be the expectation when the Markov chain is initialised from the atom $A$, the lengths of these blocks are i.i.d. with mean $\mathbb{E}_{A}[\rho_{A}(2)]$. Furthermore, $B_1$ is independent of $B_2,B_3,\dots$.


\begin{assumption}\label{assume:regenerate}
     There exists a constant $\lambda>0$ for which
    \begin{equation}\label{def_Ea}
        \mathbb{E}_{A}[\exp(\lambda \rho_{A}(2))]<\infty.
    \end{equation}
    Moreover, for this choice of $\lambda$, we                                                       assume that 
    \begin{equation}\label{def_clambda}
   \kappa_{\lambda}:=\frac{2\,\mathbb{E}_{A}[\exp(\lambda \rho_{A}(2))]}{\lambda}> \frac{1}{2}.     
    \end{equation}
\end{assumption}

Assumption 1 ensures that the regeneration time has an exponential moment, a property equivalent to geometric ergodicity, the uniform Doeblin condition, or the Foster-Lyapunov drift criterion (see Theorem 16.0.2 in \cite{meyn_markov_2012}). Under this assumption, classical convergence results such as the LIL and CLT remain valid (Chapter 17). Next, we define 
 \begin{equation}\label{def_rho_o}
  \rho_o:=\min\{ \|\rho_A(1)\|_{\psi_1},\|\rho_A(2)\|_{\psi_1} \}.   
 \end{equation}
 Observe that under  Assumption \ref{assume:regenerate}, the sets
{\small \begin{align*}
    \{\lambda>0: \expec\lb e^{(\rho_A(1)/\lambda)}\rb\leq 1\} 
    \text{ and }  \{\lambda>0 : \expec\lb e^{(\rho_A(2)/\lambda)}\rb\leq 1\}
\end{align*}}
are non-empty. It follows that $\rho_o<\infty$. 

We now proceed to formally state the main result of this section. Let $\Scal\subset \Rbb$ and define $\Fcal_\Scal:=\{ \indicator[\cdot<t], t\in \Scal\}$ be the set of all half-interval functions on $\Scal$. The following theorem is proved in \S \ref{sec:prf-tal-scaled}.

\begin{theorem}\label{prop:talagrand-scaled}
    Let $Y_1,\dots,Y_n$ be a sequence of random variables from a Markov chain satisfying Assumption \ref{assume:regenerate} with stationary distribution $\pi$. Let $Y$ be a random variable with distribution $\pi$. Define 
    \begin{align*}
        Z:= \sup_{f\in\Fcal_{[0,1]}}\lv\frac{1}{n}\sum_{i=1}^n f(Y_i) - \expec_{\pi}\lb f(Y)\rb\rv.
    \end{align*}
    Then, for all $t>0$
    \begin{align*}
    \prob(Z>t) \leq \kappa(\rho,\lambda)\exp\lp -\frac{\kappa(\rho,\lambda)nt^2}{\log n} \rp\numberthis\label{eq:talagrandscaled-conc}
    \end{align*}
    where, constant $\kappa(\rho,\lambda)$ depend only on $\expec_A[\rho_A^2(2)]$, $\expec_v[\rho_A(2)]$, $\rho_o$, and $\lambda$ (as defined in Eq. \eqref{def_Ea}, \eqref{def_clambda}, and \eqref{def_rho_o}).
\end{theorem}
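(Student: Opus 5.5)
The plan is the standard "expectation plus concentration" route: bound $\expec[Z]$ by a Rademacher complexity, then control $Z-\expec[Z]$ with a bounded-differences inequality adapted to regenerative blocks. The $\log n$ in \eqref{eq:talagrandscaled-conc} will come from truncating block lengths at order $\log n$ via the exponential moment in Assumption \ref{assume:regenerate}.

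\textbf{Step 1: block decomposition.} Using the split chain and regeneration times $\rho_A(j)$, I write
\[
n\bigl(\tfrac1n\sum_i f(Y_i)-\expec_\pi f(Y)\bigr)=\Delta_0(f)+\sum_{j=1}^{l_n-1}\sum_{i\in B_j}\bigl(f(Y_i)-\expec_\pi f\bigr)+\Delta_{\mathrm{last}}(f).
\]
Here $l_n$ is the number of regenerations up to time $n$, and $\Delta_0,\Delta_{\mathrm{last}}$ are the incomplete first and last blocks. Since $|f|\le1$, these two terms are bounded by $\rho_A(1)$ and by the length of the last block. Their $\psi_1$ norms are controlled by $\rho_o$, so
\[
\prob\bigl(\sup_f|\Delta_0|+|\Delta_{\mathrm{last}}|>nt/3\bigr)\le 2e^{-cnt/\rho_o},
\]
which is dominated by the target bound after enlarging constants.

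\textbf{Step 2: truncation on a good event.} Let $\Omega_n=\{\max_{j\le n}|B_j|\le C\log n\}$. By Assumption \ref{assume:regenerate} and a union bound,
\[
\prob(\Omega_n^c)\le n\,\expec_A[e^{\lambda\rho_A(2)}]\,n^{-\lambda C}.
\]
This is polynomially small for large $C$, but polynomial decay is not enough for an $\exp(-nt^2/\log n)$ bound. I therefore handle this part differently, in one of two ways.
\begin{itemize}
\item Restrict to the regime $t\gtrsim\sqrt{\log n\cdot\log(\cdot)/n}$; otherwise the bound is trivial once $\kappa(\rho,\lambda)\ge1$. This will need care.
\item Preferably, avoid truncation and apply a Bernstein inequality for sums of i.i.d. block variables with sub-exponential envelopes. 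The summands $g_f(B_j)=\sum_{i\in B_j}(f(Y_i)-\expec_\pi f)$ are i.i.d. with $\psi_1$ norm at most $\|\rho_A(2)\|_{\psi_1}$. This yields tails of the form $\exp(-c\min(nt^2/\sigma^2,\,nt/(\rho_o\log n)))$, where the $\log n$ enters through the maximal inequality for the sup of unbounded empirical processes (Adamczak's version of Talagrand's inequality). That is the intended source of the $\log n$.
\end{itemize}
I will use Adamczak's inequality. Its variance term is $\sigma^2\le\expec_A[\rho_A^2(2)]$, and its envelope term is $\|\max_j|B_j|\|_{\psi_1}\lesssim\rho_o\log n$.

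\textbf{Step 3: expectation bound.} The class $\{g_f:f\in\Fcal_{[0,1]}\}$ indexed by thresholds is VC-type: $g_f$ is a sum of indicators over the block, and it is monotone in the threshold. Symmetrisation over i.i.d. blocks together with Dudley's entropy integral then gives
\[
\expec\sup_f\Bigl|\sum_{j}g_f(B_j)\Bigr|\lesssim\sqrt{l_n\,\expec_A[\rho_A^2(2)]}+\rho_o\log n .
\]
This is the Rademacher-complexity bound for regenerative chains of \cite{bertail_rademacher_2019}, which I invoke directly. The randomness of $l_n$ is handled by conditioning on $l_n$ and using $l_n\le n$, with $l_n\approx n/\expec_A\rho_A(2)$ up to exponentially small deviations (Bernstein for renewal counts). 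This is where $\expec_\nu[\rho_A(2)]$ enters.

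\textbf{Step 4: combine.} Combining the expectation bound with Adamczak's deviation inequality gives
\[
\prob(Z>\expec Z+t)\le C\exp\Bigl(-c\min\bigl(\tfrac{nt^2}{\sigma^2},\tfrac{nt}{\rho_o\log n}\bigr)\Bigr).
\]
Since $t\le1$ is the only nontrivial range, $nt/\log n\ge nt^2/\log n$. Absorbing $\expec Z\lesssim\sqrt{1/n}+\log n/n$ into the constant (by considering $t$ above a multiple of it and otherwise choosing $\kappa$ large) gives \eqref{eq:talagrandscaled-conc}. The condition $\kappa_\lambda>1/2$ is used to keep the combined constant consistent, so that one $\kappa(\rho,\lambda)$ serves both as prefactor and exponent.

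\textbf{Main obstacle.} I expect the hardest step to be making the unbounded-envelope concentration work with a random number of blocks $l_n$. That requires a careful union over possible values of $l_n$ (at most $n$ of them, costing only a $\log n$) and a check that all resulting constants depend only on the stated quantities.
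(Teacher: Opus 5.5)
Your proposal is correct and follows essentially the paper's route. You bound the mean through the Bertail--Portier Rademacher complexity for VC classes on regenerative chains, control the deviation with Adamczak's tail inequality for suprema of unbounded block sums (its $\psi_1$-envelope term of order $\rho_o\log n$ is the source of the $\log n$), and then absorb the mean into the constants using $Z\le 1$. The paper does the same but cites the Markov-chain version of Adamczak's inequality, which already handles the random number of blocks and the incomplete end blocks that you treat by hand.

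One small correction: the paper uses $\kappa_\lambda>1/2$ only to ensure $\kappa_\rho=(2/\lambda)\log(n/2\kappa_\lambda)\le(2/\lambda)\log n$ in the Rademacher bound. It does not let one constant serve as both prefactor and exponent rate, and the paper's proof in fact ends with two distinct constants $\kappa^\star(\rho,\lambda)$ and $\kappa(\rho,\lambda)$.
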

\begin{remark}
    The extra $\log n$ term appears commonly when transitioning from i.i.d random variables to Markov chains \citep{samson_concentration_2000}. Observing that the regeneration time $\rho$ of any i.i.d. process is (deterministically) $1$, the following corollary shows that our previous result is tight up to $\log$ terms.
\end{remark}
\begin{corollary}\label{corollary:DKW}
       Let $Y_1,\dots,Y_n$ be a sequence of i.i.d. random variables from distribution $\pi$. Let $Y$ be a random variable with distribution $\pi$. Then, with $Z$ defined as in Theorem \ref{prop:talagrand-scaled}, and an universal constant $\constant$; 
    \begin{align*}
    \prob(Z>t) \leq \constant\exp\lp -\frac{\constant nt^2}{\log n} \rp\numberthis\label{eq:talagrandscaled-conc}
    \end{align*}
\end{corollary}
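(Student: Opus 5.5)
The plan is to obtain Corollary \ref{corollary:DKW} as a special case of Theorem \ref{prop:talagrand-scaled}, by viewing the i.i.d. sequence as a degenerate regenerating Markov chain whose constants are universal. An i.i.d. sequence $Y_1,\dots,Y_n$ with law $\pi$ is a Markov chain with kernel $P(x,\cdot)=\pi(\cdot)$ for every $x$. The whole state space $[0,1]$ is then a $\Psi$-atom with $\Psi=\pi$: the condition $P(x,\cdot)=P(y,\cdot)$ holds trivially, and $\pi$-irreducibility and aperiodicity are immediate because every set of positive $\pi$-measure is reached in one step.

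\textbf{Step 1: the regeneration times.} Since the atom $A$ is the whole state space, every step is a regeneration, so $\rho_A(1)=1$ and $\rho_A(j)=j$ deterministically. The inter-regeneration time is therefore deterministically $1$, as the remark after Theorem \ref{prop:talagrand-scaled} notes. If one insists on the split-chain construction with $\delta\in(0,1)$, one can take $\delta$ arbitrarily close to $1$. The cleaner route is to use the atom directly, which the definition allows.

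\textbf{Step 2: verify Assumption \ref{assume:regenerate} with universal constants.} We need $\mathbb{E}_A[e^{\lambda\rho}]=e^{\lambda}<\infty$ for every $\lambda$. We also need $\kappa_\lambda=2e^\lambda/\lambda>1/2$, which holds for all $\lambda>0$; fix, say, $\lambda=1$. The remaining quantities are then $\mathbb{E}_A[\rho^2]=1$ and $\mathbb{E}_\nu[\rho]=1$. For $\rho_o$, note that a deterministic constant $c=1$ has finite $\|\cdot\|_{\psi_1}$. The set in the paper's definition (read as the standard Orlicz norm with threshold $2$, or as stated) is nonempty and gives a fixed numeric value. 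All of these numbers are independent of $\pi$ and $n$.

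\textbf{Step 3: conclude.} Apply Theorem \ref{prop:talagrand-scaled}. Since $\kappa(\rho,\lambda)$ depends only on the four quantities above, all of which are now fixed numbers, we may set $\constant:=\kappa(\rho,\lambda)$ and obtain the displayed bound.

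\textbf{Main obstacle.} The main obstacle is making sure that nothing in the proof of Theorem \ref{prop:talagrand-scaled} used a nondegenerate atom or $\delta<1$, for instance through the first block $B_1$ or through the randomness of block lengths. With deterministic unit blocks each block is a single observation, so the block decomposition reduces to the sample itself. The argument therefore specializes directly, and the Rademacher complexity bound reduces to the classical one for half-intervals.

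As a sanity check, the classical DKW inequality $\prob(Z>t)\le 2e^{-2nt^2}$ gives the result directly, because $e^{-2nt^2}\le e^{-cnt^2/\log n}$ once $c\le 2\log n$ for $n\ge 2$. I would mention this as an alternative justification that confirms the claimed tightness up to the $\log n$ factor.
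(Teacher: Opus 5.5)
Your proposal is correct and follows the paper's own implicit argument: treat the i.i.d.\ sequence as an atomic chain whose atom is the whole state space, so the regeneration time is deterministically $1$ and every quantity on which $\kappa(\rho,\lambda)$ in Theorem~\ref{prop:talagrand-scaled} depends becomes a fixed number. One small correction: the ``or as stated'' should be dropped, because under the paper's literal definition $\|X\|_{\psi_1}=\inf\{\lambda>0:\expec[e^{X/\lambda}]\le 1\}$ the set is empty for $X\equiv 1$ (indeed for any regeneration time $\ge 1$), so only the standard threshold-$2$ reading gives a finite $\rho_o$, namely $1/\log 2$; your classical DKW check (prefactor $2$, exponent constant $2\log 2$) is the more robust justification and also shows that the two occurrences of $\constant$ must be read as possibly different universal constants.
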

\paragraph{On Poissonian Tail Concentration:} Observe that Theorem \ref{prop:talagrand-scaled} provides a H\"oeffding (sub-Gaussian) concentration inequality (ignoring constants)
\begin{align*}
    \prob(Z_n>t)\lesssim  e^{-nt^2/\log n}
\end{align*}
for the suprema of the empirical distribution, which is sharp when $t$ is large. However, for small values of $t$, a sharper bound can be obtained from so called Bennett (Poissonian) concentration inequalities
\begin{align*}
    \prob(Z_n>t)\lesssim  e^{-nt\log (1+t)}.
\end{align*}
Owing to the fact that such inequalities for the empirical suprema of i.i.d. processes were known due to Michael Talagrand, \citep{Talagrand1996}, our initial objective in this paper was to derive a Bennett inequality for regenerating Markov chains and then use it to derive the consistency of the \textit{weighted} risk function described in \S\ref{sec:model}. However, upon further investigation, multiple papers \citep{samson_concentration_2000,adamczak_tail_2008} point out that such a result requires novel approaches towards empirical process theory, which makes it beyond the scope of the current work. The particular pathology is the unavailability of the Hoffmann-Jørgensen inequality (more specifically, a counterpart to Theorem 6.21 in \citeauthor{ledoux_probability_1991}) for the Bennett-Orlicz norm \citep{wellner_bennett-orlicz_2017}. However, we point out that availability of any such result will be immediately applicable; and proving the consistency of the weighted risk function would be a straightforward extension of the proof in \S\ref{sec:prf-marginal}.
\begin{remark}
    The previous challenge is not purely theoretical, since it clearly leads to a pathway of improving the risk function for change point detection of Markov chains.
\end{remark}

\subsection{Change Point Detection}\label{sec:change-point}
Now we proceed to state the main result of this section. To that end, we make some assumptions, beginning with the regeneration times of the underlying Markov chains. Recall from \cref{eq:triangular} that $X_j^i$ was used to denote the $j$-th sample of the $i$-th Markov chain, with $K_n+1$ many Markov chains being available in total. 
\begin{assumption}\label{assume:regenerate-change}
    All of the Markov chains $X_j^i$ satisfy Assumption \ref{assume:regenerate} with constants $\kappa_i$ for $i\in \{1,\dots,K_n+1\}$
\end{assumption}
\begin{remark}
    Observe it is not sufficient to simply impose Assumption \ref{assume:regenerate} the Markov chain  $X_1,\dots,X_n$, since one of $X_j^i$'s may be transient while the whole chain stays regenerating.
\end{remark}
We next make the following standard assumption (see \citet[A1-A3]{zou_nonparametric_2014}) of sufficient gap between any two consecutive changes:

\begin{assumption}\label{assume:consistency}
    Let $\beta_n : = \min_{1\leq k\leq K_n+1}\lp \tau_k-\tau_{k-1} \rp$. We assume that
    \begin{align*}
       \beta_n \xrightarrow{n\rightarrow\infty}\infty \text{ \quad and \quad } \hat F_n(u)\xrightarrow{n\rightarrow\infty} F(u)
    \end{align*}
    almost everywhere uniformly in $u$, for some distribution $F$ in the convex hull generated by $\{F_1,\dots,F_{K_n+1}\}$.
\end{assumption}
Finally, for all $r\in \{1,\dots,K_n+1\}$ define $\eta_r(u):= (F_{r-1}(u)-F_{r}(u))^2$. Our next assumption is on the minimum amount of change between any two $F_i$'s. It can be thought of as the minimum signal strength indicating change.
\begin{assumption}\label{assume:identifiability} 
$F_{k}$ are continuous and distinct for all $k$ and define
    \begin{align*}
        \eta_{\min{}} := \min_{1\leq r\leq K_n+1} \int_0^1 \eta_r(u)dF(u).
    \end{align*}
    Note that, $\eta_{\min{}}>0$.
\end{assumption}
\begin{remark}
    We assume that we know the true value of $\eta_{\min{}}$. In practice, we can replace it by a large enough constant. By a careful inspection of the proof of Theorem \ref{thm:marginal} (and in particular \cref{eq:onetamin}) one can see that any constant larger than $\eta_{\min}$ is actually sufficient for detecting the correct number of change points.
\end{remark}

To formalise our main result we introduce some notation. Let $L>0$ be an integer and define $\tau_1',\dots,\tau_{K_n}'$ be a sequence of time points. We define the set $\Ccal_{K_n}(\delta_n)$ as the set of all $(\tau_1',\dots,\tau_{K_n}')$ which are at most $\delta_n$ away from the true change point $\tau_1,\dots,\tau_{K_n}$. Formally,
\begin{align*}
    \Ccal_{K_n}(\delta_n) & := \{(\tau_1',\dots,\tau_{K_n}'):1<\tau_1'<\dots<\tau_{K_n}'\leq n, \\
    & \qquad |\tau_s'-\tau_s|\leq \delta_n \ \forall \  1\leq s\leq K_n \}
\end{align*} 
for some positive sequence $\delta_n$. Next, for a fixed $L$, define $\Gcal_n(L)$ as the set of time points which minimizes $\operatorname{BIC_L}$. Formally,
\begin{align*}
    \Gcal_n(L)
    = & (\hat \tau_1,\dots , \hat\tau_L):=\argmin_{\tau_1'<\dots<\tau_L'} \operatorname{BIC_L}\\
    = & -\argmax_{\tau_1'<\dots<\tau_L'} R_n(\tau_1',\dots,\tau_L')+L\zeta_n.
\end{align*}
Then, we have the following theorem, proved in \S\ref{sec:prf-marginal}.

\begin{theorem}\label{thm:marginal}
    Under Assumptions \ref{assume:regenerate}, \ref{assume:consistency}, and \ref{assume:identifiability}, and  if ${K_n^3{(\log K_n)^2(\log \delta_n)^2}/{\delta_n}}=\Ocal(1)$ and $\delta_n/\beta_n\rightarrow0$, then with any sequence $\zeta_N\rightarrow\infty$, we have
   \begin{align*}
        \probl\lp \Gcal_n(K_n)\in \Ccal_{K_n}(\delta_n) \rp\xrightarrow{n\rightarrow\infty\ }1.
    \end{align*}
    Furthermore, let $K_n\leq \bar K$ for some known $\bar K$. Then, with $\zeta_n = {\kappa_n\bar K^3{(\log \bar K)^2(\log \beta_n)^2}/{\beta_n}}$,
    \begin{align*}
        \hat K_n :=\arg\min_L \Gcal_n(L)
    \end{align*}
    satisfies $\prob\lp \hat K_n = K_n \rp\rightarrow 1$, where $\kappa_n$ is a constant that depends only on $\kappa_i, i\in\{1,\dots,K_n+1\}$ .
\end{theorem}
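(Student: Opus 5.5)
We follow the strategy of \cite{zou_nonparametric_2014}, with Theorem \ref{prop:talagrand-scaled} replacing the i.i.d.\ DKW inequality. The first step is to decompose the empirical risk as $R_n = \bar R_n + \Delta_n$. Here $\bar R_n(\tau_1',\dots,\tau_L')$ is the population analogue of \cref{eq:risk}: each segment $(\tau_i',\tau_{i+1}']$ carries the mixture distribution $F_{\tau_i'}^{\tau_{i+1}'}$, obtained by weighting the $F_k$ by their overlap lengths, and the integral is taken against $dF$. The term $\Delta_n$ collects the fluctuations. Because $x(1-x)$ is $1$-Lipschitz on $[0,1]$, each segment contributes at most $(\tau_{i+1}'-\tau_i')\bigl(2\sup_u|\hat F_{\tau_i'}^{\tau_{i+1}'}-F_{\tau_i'}^{\tau_{i+1}'}| + \sup_u|\hat F_n - F|\bigr)$ to $|\Delta_n|$. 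A segment overlapping several true regimes is split at the true $\tau_k$, and on each piece we apply Theorem \ref{prop:talagrand-scaled} to the chain $X^k$, which is allowed by Assumption \ref{assume:regenerate-change}.

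We then take a union bound over all $\Ocal(n^2)$ segments of length at least $\delta_n$. This gives, with probability tending to one, $\sup_u|\hat F_{a}^{b}-F_a^b|\lesssim \sqrt{\log n\,\log(n)/(b-a)}$ uniformly. The statement, however, is phrased through $\log\delta_n$ and $\log K_n$, so the union bound must be localized. We restrict to endpoints within distance $\beta_n$ of the true change points and peel at dyadic scales $2^j\delta_n$. This is the source of the $(\log\delta_n)^2$ and $K_n$-polynomial factors. The term $\int \cdot\, d(\hat F_n - F)$ is controlled by Assumption \ref{assume:consistency}.

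The deterministic core is the following. For a configuration with some $|\tau_s'-\tau_s|=d>\delta_n$, compare with the configuration that moves $\tau_s'$ back to $\tau_s$. The exact computation sketched in \S\ref{sec:introduction}, via the variance decomposition of a mixture, shows that $\bar R_n$ increases by at least $c\,\min(d,\beta_n)\,\eta_{\min}$. The reason is that merging $d$ points from $F_{s+1}$ into a segment of length at least $\beta_n$ of $F_s$ adds roughly $d(1-d/\beta_n)\int\eta_{s+1}dF$; the assumption $\delta_n/\beta_n\to0$ keeps the factor $1-d/\beta_n$ bounded below, at least near the optimum. The stochastic change over the same move is of order $\sqrt{d\log n}$ plus cross terms of order $K_n\sqrt{\beta_n\log}$ from neighbouring segments. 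For $d\ge\delta_n$ with $K_n^3(\log K_n)^2(\log\delta_n)^2/\delta_n$ bounded, together with the constant absorbed into $\kappa_n$, the drift dominates. Summing over the $K_n$ change points then shows that the minimizer $\Gcal_n(K_n)$ lies in $\Ccal_{K_n}(\delta_n)$. I expect this to be the main obstacle. The difficulty is handling simultaneous misplacement of many $\tau_s'$, including configurations where two estimates fall in one true segment and none in another; this is the role of the monotonicity Lemma \ref{lemma:monotne}. A further difficulty is making the union bound over $K_n$ locations produce exactly the stated polynomial rate.

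For $\hat K_n$ we split into two cases. In the case $L<K_n$, some true change is missed by a gap of order $\beta_n$. By the drift bound, $\min R_n(L)-\min R_n(K_n)\ge c\beta_n\eta_{\min}$, which dominates $(K_n-L)\zeta_n\le \bar K\zeta_n=o(\beta_n)$ because of the choice of $\zeta_n$. In the case $L>K_n$, Lemma \ref{lemma:monotne} together with the first part gives that extra split points reduce $\bar R_n$ by at most $\Ocal(\delta_n)$-scale boundary terms, since within homogeneous segments the population risk is additive. They also reduce $\Delta_n$ by at most the uniform fluctuation bound, of order $\kappa_n\bar K^3(\log\bar K)^2(\log\beta_n)^2/\beta_n$ after normalization. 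That bound is strictly smaller than the penalty increment $(L-K_n)\zeta_n$ once constants are chosen appropriately. Hence $\mathrm{BIC}_{K_n}$ is the minimum with probability tending to one.
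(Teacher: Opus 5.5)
There is a genuine gap in how you control the stochastic part. Your bound on $\Delta_n$ is first order. Through the Lipschitz property of $x(1-x)$, a segment of length $m$ carries an error $m\sup_u|\hat F-F^{\mathrm{mix}}|$, which is of order $\sqrt{m}$ up to logarithms. Comparing two configurations with these bounds simply adds them over the segments that change, and those have length of order $\beta_n$. You note yourself that this leaves cross terms of order $K_n\sqrt{\beta_n\log n}$, while the drift of a misplacement by $d\approx\delta_n$ is only of order $\delta_n\eta_{\min}$. The hypotheses $\delta_n/\beta_n\to0$ and $K_n^3(\log K_n)^2(\log\delta_n)^2/\delta_n=\Ocal(1)$ allow, for instance, $\delta_n=\log\beta_n$ with $K_n$ bounded. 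So the claim that the drift dominates is false, and neither the constant in $\kappa_n$ nor dyadic peeling of the union bound can repair it.

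Your term $\int\cdot\,d(\hat F_n-F)$ has the same defect. Assumption~\ref{assume:consistency} gives uniform convergence without a rate, so after multiplying by segment lengths you only get $o(\beta_n)$, not $o(\delta_n)$. The $L>K_n$ case inherits the problem: with a first-order bound, the fluctuation that spurious splits can remove is again a square root of segment lengths. Comparing it with the penalty ``after normalization'' is also not legitimate, because $R_n$ in \cref{eq:risk} is not normalized.

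The missing idea is that the linear part of $\hat F(1-\hat F)$ is additive over a partition. If a stretch of length $n^\star$ is cut into pieces of lengths $n_i^\star$, then $\sum_i n_i^\star\hat F_i=n^\star\hat F$. Hence, for each $u$, the integrand of the change in risk equals exactly $-\sum_i n_i^\star(\hat F_i-F_s)^2+n^\star(\hat F-F_s)^2$ (see \cref{eq:newmonotone-eq1} and what follows it). The stochastic error therefore enters only through the quadratic quantities $\xi_m(k,l)=n_{kl}\int(\hat F_k^l-F_m)^2\,d\hat F_n$. Lemma~\ref{lemma:KLD-bound} bounds these by the polylogarithmic $u_n$ rather than a square root of the segment length, and that is what makes a signal of size $\delta_n\eta_{\min}$ detectable.

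The paper also sidesteps your ``move $\tau_s'$ back to $\tau_s$'' comparison and the simultaneous-misplacement issue you flagged. If $\tau'$ misses some $\tau_r$ by more than $\delta_n$, Lemma~\ref{lemma:monotone3} lets one insert all other true change points together with $\tau_r\pm\delta_n$, which can only lower $R_n$. The refined partition then has two kinds of pieces:
\begin{itemize}
\item homogeneous pieces, each within an $\Ocal_p$ term of order $u_n^{(K_n)}$ of the corresponding true-segment risk, by Lemma~\ref{lemma:monotne};
\item a single unsplit window $[\tau_r-\delta_n,\tau_r+\delta_n]$, whose excess over its split version is of order $\delta_n\int\eta_r\,dF$ (the $\Delta_S$ and $\Hcal$ computation).
\end{itemize}
Assumption~\ref{assume:consistency} is used only on this last term, where an $o(1)$ relative error is harmless. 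For $L>K_n$, the same insertion and $\xi$-control show that any configuration has unnormalized risk at least $R_n(\tau_1,\dots,\tau_{K_n})$ minus polylogarithmic $\Ocal_p$ terms built from $u_n^{(L)}$. Lemma~\ref{lemma:BIC-exactrecovery} weighs the penalty increment against exactly these terms, so no $\Ocal(\delta_n)$ boundary terms or normalization enter.
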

Theorem \ref{thm:marginal} provides the asymptotic consistency of $\hat K_n$, which is the estimated number of change points. It implies that the localization accuracy $\delta_n$ may be taken as any sequence satisfying
\[
K_n^3 (\log K_n)^2 (\log \delta_n)^2 / \delta_n = O(1), \qquad \delta_n / \beta_n \to 0.
\]
In particular, if $K_n = O(1)$, then $\delta_n = O(1)$ is admissible, yielding constant-error localization—the best achievable rate in nonparametric change-point detection. When $K_n$ grows, the attainable $\delta_n$ remains of order $\beta_n$ up to polylogarithmic factors, matching known optimal rates in the i.i.d.\ literature \citep{zou_nonparametric_2014}.

Now consider the independent case. Here the regeneration time is $1$, with the atom equal to the entire state space. Hence any i.i.d.\ sequence satisfies Assumption~\ref{assume:regenerate-change} with deterministic regeneration time $1$ ($\rho_A(j)=1$ for all $j$).
 In particular, $\kappa_n$ in the Theorem \ref{thm:marginal} becomes an universal constant, and we recover the following corollary.\begin{corollary}\label{corollary:iid}
    Let $X_i$ be independent. Then, under Assumptions \ref{assume:regenerate}, \ref{assume:consistency}, and \ref{assume:identifiability}  if ${K_n^3{(\log K_n)^2(\log \delta_n)^2}/{\delta_n}}=\Ocal(1)$ and $\delta_n/\beta_n\rightarrow0$, then with any sequence $\zeta_N\rightarrow\infty$, we have
    \begin{align*}
        \probl\lp \Gcal_n(K_n)\in \Ccal_{K_n}(\delta_n) \rp\xrightarrow{n\rightarrow\infty\ }1.
    \end{align*}
    Furthermore, let $K_n\leq \bar K$ for some known $\bar K$. Then, with $\zeta_n ={\bar K^3{(\log \bar K)^2(\log \beta_n)^2}/{\beta_n}}$,
    \begin{align*}
        \hat K_n :=\argmin_L \Gcal_n(L)
    \end{align*}
    satisfies $\prob\lp \hat K_n = K_n \rp\rightarrow 1$.
\end{corollary}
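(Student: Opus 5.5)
The corollary follows by specialising Theorem~\ref{thm:marginal} to independent data. The work is to check that an i.i.d.\ sequence (within each segment) satisfies the hypotheses of that theorem, and that the constant $\kappa_n$ becomes universal.

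\textbf{Step 1: the regenerative hypothesis.} If the $X^i_j$ are independent with law $F_i$, then $P(x,\cdot)=F_i(\cdot)$ for every $x$. So the whole state space $[0,1]$ is a $\Psi$-atom with $\Psi=F_i$. In the splitting construction we may take $\delta=1$, or equivalently treat the chain as already atomic, so that every step is a regeneration. Hence $\rho_A(1)=1$ and $\rho_A(j)-\rho_A(j-1)=1$ deterministically. Consequently:
\begin{itemize}
\item $\mathbb{E}_A[\exp(\lambda\rho_A(2))]=e^{\lambda}<\infty$ for every $\lambda>0$;
\item choosing, for example, $\lambda=1$ gives $\kappa_\lambda=2e>1/2$, so Assumption~\ref{assume:regenerate} holds;
\item $\expec_A[\rho_A^2(2)]$, $\expec_\nu[\rho_A(2)]$ and $\rho_o=\|1\|_{\psi_1}$ are absolute numbers.
\end{itemize}
Applying this segmentwise gives Assumption~\ref{assume:regenerate-change}, with each $\kappa_i$ equal to the same universal value.

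\textbf{Step 2: universality of the constants.} In Theorem~\ref{prop:talagrand-scaled}, $\kappa(\rho,\lambda)$ depends only on the quantities listed in Step~1. It is therefore a universal constant, which is exactly Corollary~\ref{corollary:DKW}. In Theorem~\ref{thm:marginal}, $\kappa_n$ depends only on $\kappa_1,\dots,\kappa_{K_n+1}$. Since these are all equal to one fixed value, $\kappa_n$ is a universal constant $\constant$.

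\textbf{Step 3: invoking the theorem.} Assumptions~\ref{assume:consistency} and~\ref{assume:identifiability} and the rate conditions on $\delta_n,\beta_n,K_n$ are hypotheses of the corollary, so the first conclusion of Theorem~\ref{thm:marginal} applies verbatim. For the second conclusion, the theorem's penalty is $\zeta_n=\constant\bar K^3(\log\bar K)^2(\log\beta_n)^2/\beta_n$, whereas the corollary drops $\constant$. This discrepancy is the only delicate point. I would handle it by inspecting the proof of Theorem~\ref{thm:marginal} in \S\ref{sec:prf-marginal}, which compares $\zeta_n$ on two sides:
\begin{itemize}
\item \emph{Against overfitting:} $\zeta_n$ must dominate the stochastic fluctuation of $R_n$ from adding spurious change points. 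That fluctuation is controlled by the DKW bound and is of order $\bar K^3(\log\bar K)^2(\log\beta_n)^2/\beta_n$ in probability.
\item \emph{Against underfitting:} $\zeta_n$ must be $o$ of the deterministic loss $\gtrsim\beta_n\eta_{\min}$ incurred by missing a true change point.
\end{itemize}
When the constant is universal, absorbing it amounts to rescaling the high-probability event by a fixed factor. Equivalently, one can rerun the argument with $\zeta_n$ multiplied by a slowly diverging factor, which preserves both inequalities because $\delta_n/\beta_n\to0$ leaves room.

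I expect this constant-absorption check to be the main, though mild, obstacle. If the proof of Theorem~\ref{thm:marginal} genuinely needs $\zeta_n\ge\constant\cdot(\text{rate})$ with a specific $\constant>1$, then the corollary's penalty should be read with the universal constant implicit, consistent with the paper's convention that $\constant$ denotes a generic universal constant. In either reading, $\prob(\hat K_n=K_n)\to1$ follows.
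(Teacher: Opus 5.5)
Your proposal is correct and is essentially the paper's argument: the paper derives this corollary solely by noting that for independent data the whole state space is an atom with deterministic regeneration time $1$, so Assumption~\ref{assume:regenerate-change} holds with identical constants and $\kappa_n$ in Theorem~\ref{thm:marginal} becomes universal. The paper then simply omits that universal constant from the penalty, i.e.\ it takes your second reading in Step~3, so your ``rescaling'' justification is not needed (and would not be automatic anyway, since an $\Ocal_p$ bound at one constant does not by itself control deviations at a smaller constant).
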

Before presenting a sketch of the proof of the theorem, we discuss some aspects of the previous result.
\paragraph{Tightness:} A direct comparison of Corollary \ref{corollary:iid} with \citet[Theorem 1]{zou_nonparametric_2014} reveals that the rate of $\beta_n$ do not suffer when we move from i.i.d. to Markovian data, but rather adjusts for the Markovianity in the obvious way, through a constant which depends upon the regeneration time, and is an universal constant for i.i.d. data. 

\paragraph{Choice of penalty:} We further remark that $\beta_n$ is typically unavailable to the practitioner, and setting the penalty $\zeta_n=(\log n)^{2+c}$ for a small constant $c\geq 0$ seems to work well in practice \citep{zou_nonparametric_2014}. We use $c=0$ in \S\ref{sec:computation}. We now provide a sketch of proof of Theorem \ref{thm:marginal}.

\textbf{Sketch of Proof of Theorem \ref{thm:marginal}}
To guide the readers through the broad steps of the proof, we provide a sketch of the proof. For the full details, see \S\ref{sec:prf-marginal}.
\begin{enumerate}
    \item Use Theorem \ref{prop:talagrand-scaled} to develop a deviation bound for the clustering variance risk (Lemma \ref{lemma:KLD-bound}).
    \item Use Lemma \ref{lemma:KLD-bound} to bound the rate of growth of the risk $R_n$ of detecting superfluous change points (Lemmas \ref{lemma:monotne} and \ref{lemma:monotone3}).
    \item Use Lemma \ref{lemma:KLD-bound} to bound the rate of growth of the clustering variance of detecting superfluous change points (Lemma \ref{lemma:monotone2})
    \item Use Lemmas \ref{lemma:monotne}, \ref{lemma:monotone3}, and \ref{lemma:monotone2} to bound the probability of detecting less than the correct number of change points (Lemma \ref{lemma:BIC-minimalrecovery}).
    \item Using contradiction, bound the probability of detecting more than the correct number of change points (Lemma \ref{lemma:BIC-exactrecovery}). Theorem \ref{thm:marginal} follows by combining Lemmas \ref{lemma:BIC-minimalrecovery} and \ref{lemma:BIC-exactrecovery}.
\end{enumerate}

\section{COMPUTATION}\label{sec:computation}
We recast the risk in \eqref{eq:risk} as a nonlinear binary optimization problem. Let $z_{i,l}\in\{0,1\}$ indicate whether time point $i$ is assigned to segment $l\in\{1,\dots,L+1\}$. The segmentation constraints enforce single assignment, contiguity, and a minimum segment length. The resulting formulation both encodes \eqref{eq:risk} exactly (for fixed $L$) and readily accommodates additional structural constraints that are awkward for classical dynamic programming. We therefore write the following proposition, whose proof can be found in \S\ref{sec:prf-optimisation}.

\begin{proposition}\label{prop:optimization}
Let $a_{i,u}:=\indicator[X_i\leq X_{(u)}]$  and
\begin{align*}
    \mathcal{Z}_{i,l}& :=\Big\{z_{i,l}\in\{0,1\}\ \Big|\ \sum_{l=1}^{L+1}z_{i,l}=1,\ \sum_{i=1}^{n}z_{i,l}\ge 3,\\ &\qquad \ z_{i,l}\le \sum_{l'\ge l} z_{i+1,l'}\Big\}.
\end{align*}

Consider
\begin{equation}\label{nonlinear_model}
\begin{aligned}
\min\quad & \sum_{l=1}^{L+1}\sum_{u=1}^{n}\bigg(\sum_{i=1}^{n} a_{i,u}z_{i,l}\bigg)\left(1-\frac{\sum_{i=1}^{n} a_{i,u}z_{i,l}}{\sum_{i=1}^{n} z_{i,l}}\right)\\
\text{s.t.}\quad & z_{i,l}\in \mathcal{Z}_{i,l}\qquad \forall i,l.
\end{aligned}
\end{equation}
Then $z^\star$ is a solution of \eqref{nonlinear_model} if and only if the induced change points $\tau'_l-\tau'_{l-1}=\sum_{i=1}^{n} z^\star_{i,l}$ solve \eqref{eq:risk}; moreover, the objective values of \eqref{nonlinear_model} and \eqref{eq:risk} coincide.
\end{proposition}
To reduce runtime, we use the following bilinear reformulation, formalized in Proposition \ref{prop:bilinear}, which preserves optimality (see \S \ref{sec:prf-optimisation_bilinear} for the proof) while avoiding division inside the objective, and can be solved with off-the-shelf solvers like Gurobi. A detailed empirical comparison of runtimes is provided in \S\ref{simulated_data}.  
\begin{proposition}
\label{prop:bilinear}

Consider the bilinear reformulation:
{\small
\begin{equation}
\label{linear_model}
\begin{aligned}
\min\quad & \sum_{l=1}^{L+1}\sum_{u=1}^n s_{u,l} \\[2pt]
\text{s.t.}\quad
& \sum_{i=1}^n k_l  z_{i,l}  = 1  \quad \forall l,\\
& f_{u,l} = \sum_{i=1}^n a_{i,u} k_l  z_{i,l}, \text{ and} \ 
 d_{u,l} + f_{u,l} = 1 \  \forall u, l,\\
& s_{u,l} = \sum_{i=1}^n a_{i,u} d_{u,l} z_{i,l} \ \forall u, \forall l; \quad z_{i,l}\in \mathcal{Z}_{i,l} \quad  \forall i,\forall l,\\
& k_l,\ f_{u,l},\ d_{u,l},\ s_{u,l}\geq 0\quad \forall u, \forall l,  
\end{aligned}
\end{equation}
}
 then $z^*$ is a solution of \eqref{nonlinear_model} if and only if $(z^*,k^*,f^*,d^*,s^*)$ is a solution of \eqref{linear_model} and the objectives are the same.  
\end{proposition}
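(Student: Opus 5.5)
The plan is to show that the feasible sets of the two programs match under the bijection $z\mapsto(z,k,f,d,s)$, with the auxiliary variables uniquely determined by $z$, and that the objectives agree on corresponding points. Optimality and equality of optimal values then transfer directly to \eqref{nonlinear_model}.

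\emph{Step 1 (auxiliary variables are determined by $z$).} Fix a feasible $(z,k,f,d,s)$ of \eqref{linear_model}. Since $z_{i,l}\in\mathcal{Z}_{i,l}$, each segment satisfies $N_l:=\sum_{i=1}^n z_{i,l}\ge 3>0$. The constraint $\sum_i k_l z_{i,l}=1$ is $k_lN_l=1$, so $k_l=1/N_l$. This gives, in order,
\begin{align*}
f_{u,l}&=\frac{\sum_i a_{i,u}z_{i,l}}{N_l},\qquad d_{u,l}=1-f_{u,l},\\
s_{u,l}&=d_{u,l}\sum_i a_{i,u}z_{i,l}=\Big(\sum_i a_{i,u}z_{i,l}\Big)\Big(1-\frac{\sum_i a_{i,u}z_{i,l}}{N_l}\Big).
\end{align*}
For the nonnegativity constraints, $a_{i,u}\in\{0,1\}$ gives $0\le\sum_i a_{i,u}z_{i,l}\le N_l$. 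Hence $f_{u,l}\in[0,1]$, and $d_{u,l}$ and $s_{u,l}$ are nonnegative as well.

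\emph{Step 2 (every $z$ in \eqref{nonlinear_model} lifts).} Conversely, take any $z$ feasible for \eqref{nonlinear_model}. Defining $k,f,d,s$ by the formulas of Step 1 gives a feasible point of \eqref{linear_model}. So the $z$-projection of the feasible set of \eqref{linear_model} is exactly the feasible set of \eqref{nonlinear_model}, and the lift is unique.

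\emph{Step 3 (objectives agree).} Summing $s_{u,l}$ over $u,l$ reproduces the objective of \eqref{nonlinear_model} term by term. The feasible sets are in bijection with equal objective values, so minimisers correspond and the optimal values coincide. In particular, $z^*$ solves \eqref{nonlinear_model} iff its unique lift $(z^*,k^*,f^*,d^*,s^*)$ solves \eqref{linear_model}.

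\emph{Main obstacle.} The argument must rule out degenerate choices of $k_l$ and must not depend on $s$ being only an inequality. The segment-size constraint $\sum_i z_{i,l}\ge 3$ in $\mathcal{Z}_{i,l}$ excludes empty segments, so $k_l$ is forced to be finite and unique. The $s$-constraint is an equality, so no relaxation gap can arise. The remaining work is to check carefully that the bilinear products $k_lz_{i,l}$ and $d_{u,l}z_{i,l}$ are handled exactly by the substitution above. If needed, Proposition~\ref{prop:optimization} then links the result back to \eqref{eq:risk}.
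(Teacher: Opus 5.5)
Your proof is correct and follows the same route as the paper's: lift a feasible $z$ via $k_l=1/\sum_i z_{i,l}$, $f_{u,l}=\sum_i a_{i,u}k_l z_{i,l}$, $d_{u,l}=1-f_{u,l}$, $s_{u,l}=\sum_i a_{i,u}d_{u,l}z_{i,l}$, then check that the objectives coincide in both directions. You are somewhat more careful than the paper: you make explicit that $\sum_i z_{i,l}\ge 3$ forces $k_l$, and hence the whole lift, to be unique, and that the lifted variables satisfy the nonnegativity constraints; the paper's converse direction uses both facts without stating them.
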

\subsection{Simulation}\label{simulated_data}
We compare \eqref{nonlinear_model}, its bilinear reformulation \eqref{linear_model}, and PELT on a nonstationary Markov chain. We generate $n=250$ time points partitioned into $4$ segments of lengths $0.1n,0.2n,0.3n,0.4n$ (with the corresponding time points $\tau_i$ being 25,75,150 respectively). On segment $l$, arrivals and departures follow the arrival rate $\lambda_l$ and departure rate $\mu_l$ are drawn from uniform distributions:
\[
\lambda_l \sim \text{Uniform}(0,\alpha_\lambda), 
\quad 
\mu_l \sim \text{Uniform}(0,\alpha_\mu),
\]
where $\alpha_\lambda,\alpha_\mu > 0$ are scaling factors controlling variability.  
At each time point $i$, let $l(i)$ denote the segment containing point $i$. The system evolves according to
\[
A_i \sim \text{Poisson}(\lambda_{l(i)}), 
\quad 
D_i \sim \text{Binomial}(N_{i-1}, \mu_{l(i)}),
\]
The population is updated recursively by adding arrivals and subtracting departures.
Be specific,
$
N_i = \max\{0,\, N_{i-1} + A_i - D_i\}.
$
All PELT results are obtained using the changepoint.np package, which implements the nonparametric PELT method of \cite{zou_nonparametric_2014}. Experiments were run using the default MBIC penalty, which provided the strongest empirical performance. Note that PELT guarantees pruning efficiency (\cite{killick2012optimal}, Thm. 3.1) but not correctness; in contrast, our MIP formulations exactly minimize the clustering-variance risk.

Figure~\ref{fig:nonMarkov} shows that our method recovers the true change points exactly, $\hat\tau=[25,75,150]$, while PELT over-segments ($\hat\tau=[25,37,46,72,151,161,176,204]$). Across $n$, \eqref{linear_model} matches the accuracy of \eqref{nonlinear_model} at substantially lower cost, whereas PELT is fastest but consistently overestimates $L$ (Table~\ref{tab:placeholder}).

\begin{figure}
    \centering
    \includegraphics[width=\linewidth]{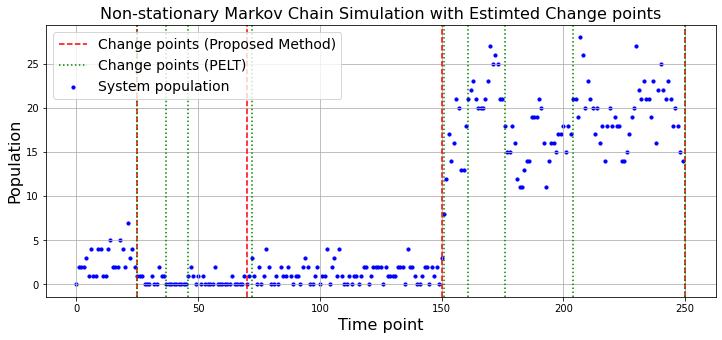}
    \caption{\small Illustration of the simulated nonstationary Markov chain data. Blue dots represent the population size at each time point. Red lines (our proposed method) mark $\hat{\tau}=[25, 75, 150]$, which match the true change points. Green lines (the PELT method) mark $\hat{\tau}= [25, 37, 46, 72, 151 ,161, 176, 204]$, which overestimate the total number of change points.}
    \label{fig:nonMarkov}
\end{figure}

We now compare runtime and accuracy. Across all $n$, both \eqref{nonlinear_model} and \eqref{linear_model} exactly recover the true change points, with the bilinear reformulation achieving a substantial speedup over \eqref{nonlinear_model}. PELT is the fastest but consistently over-segments. Thus, \eqref{linear_model} offers the best trade-off, retaining accuracy while reducing runtime, whereas PELT sacrifices reliability for speed (Table~\ref{tab:placeholder}).

{
\begin{table}[!h]
    \centering
    \caption{\small Runtime and change points of the optimization frameworks and the PELT method with different numbers of time points $n$. \eqref{nonlinear_model}-Time (seconds) is the runtime for \eqref{nonlinear_model}, which is slowest, while \eqref{linear_model}-Time (seconds) is the runtime for \eqref{linear_model}, which is much faster; both recover the true change points. PELT-Time (seconds) is the runtime and PELT-L is the number of change points estimated by the PELT method. The PELT method is the fastest but consistently estimates more than the correct number of change points.}
    \begin{small}
          \begin{tabular}{|c|cc|cc|}
    \hline
     n  &  
   \eqref{nonlinear_model}-Time & \eqref{linear_model}-Time   &  PELT-Time & PELT-L \\
   \hline
   50   & 5.49 & 0.69  & 0.03 &  7 \\ 
   100 & 9.98 &1.42  &0.07  & 8\\
   250&30.42  & 9.43  & 0.35  & 9\\
   400 & 180.38  & 28.71 & 1.14  &  22 \\
   500& 302.76  & 92.59  & 1.53  &  26\\
   \hline
    \end{tabular}
    \end{small}
    \label{tab:placeholder}
\end{table}}
\section{CONCLUSION}\label{sec:conclusions}
Clustering for Markov chains has some recent developments \citep{lee2025near}, and we present a nonparametric framework for multiple change-point detection in Markov chains that couples a DKW-type inequality for regenerating chains with an adaptive clustering criterion. The theory yields consistent estimation of both the number and locations of changes with rates that match the i.i.d.\ benchmark up to logarithmic factors, indicating that dependence need not fundamentally hinder detectability. 

On the computational side, we provide exact optimization formulations that achieve high accuracy; the bilinear model \eqref{linear_model} offers a practical speed–accuracy trade-off and outperforms a PELT baseline that tends to over-segment in our experiments. 

Two directions appear especially promising. First, Bennett–type (Poissonian) concentration for Markov chains would directly enable weighted risks with greater power in tail-difference regimes (see \S\ref{sec:nummelin-splitting}). Second, enriching the optimization with robustness and domain constraints (e.g., minimum dwell times, forbidden transitions, or outlier resistance) would broaden applicability without sacrificing exactness.

\paragraph{Limitations and future outlook.} Online methods such as CUSUM require knowledge of the pre- and post-change distributions and use their KL-divergence to set thresholds. Offline method addresses the complementary regime where these distributions are unknown, trading immediate detection for identifiability. As we note earlier, online methods with unknown pre- and post-change distributions being an evolving field.

We also remark that the multivariate extension remains open, since meaningful generalizations likely require kernel-based empirical processes or alternative notions of dependence, as the univariate DKW framework does not transfer directly. Developing such tools is an interesting direction, and we plan to investigate it in a future work.

\subsection*{Acknowledgements}
The first author acknowledges the IEMS Alumni Fellowship at Northwestern University for financial support during the conduct of this research. The second and last authors acknowledge support from NIAID grant R01AI168144. The authors also thank the four anonymous reviewers for their useful comments and suggestions, which significantly improved the readability of the paper.

\newpage
\bibliographystyle{apalike}    
\bibliography{biblio,concentration_91}

\clearpage
\appendix
\thispagestyle{empty}

\onecolumn
\aistatstitle{Supplementary Materials for ``Nonparametric Multi Change Point Detection for Markov Chains via Adaptive Clustering'' \\
}

\section{Proofs}

\subsection{Proof of Theorem \ref{thm:marginal}}\label{sec:prf-marginal}

    This section is dedicated to the proof of Theorem \ref{thm:marginal}. As detailed in the sketch of the proof, Theorem \ref{thm:marginal} will follow as a consequence of a series of 6 lemmas (Lemma \ref{lemma:KLD-bound}-\ref{lemma:BIC-exactrecovery}). In this section, we only prove Lemmas \ref{lemma:BIC-minimalrecovery} and \ref{lemma:BIC-exactrecovery}, while the proofs of Lemmas \ref{lemma:KLD-bound}-\ref{lemma:monotone3} are not directly related and thus deferred till later in this section. Before presenting our results, we introduce the $\Ocal(\cdot,\cdot,
    \cdot)$ notation for convenience.

    \begin{definition}
     A sequence of random variables $Z_n$ is said to be $\Ocal_p(a_n,b_n,\Cd)$ if 
    \begin{align*}
       \lim_{n\rightarrow\infty}  b_n\prob(|Z_n|>\Cd a_n)\leq \epsilon\numberthis\label{eq:OcalN}
    \end{align*}
    where $a_n,b_n$ are sequence of positive real numbers, and $\Cd>0$ is assumed to be independent of $n$.
    \end{definition}

    We now state Lemmas \ref{lemma:KLD-bound}-\ref{lemma:monotone2} starting with Lemma \ref{lemma:KLD-bound} provides a deviation bound for the empirical distribution in a neighborhood of the change point.  It is proved in \S \ref{sec:prf_KLDbd}.
    \begin{lemma}\label{lemma:KLD-bound}
        Let $l<k$ be two time points and let $n_{kl}:=l-k$, and
        \begin{align*}
            \xi_m(k,l):={n_{kl}}\int_{X_{(1)}}^{X_{(n)}}\lp \hat F_k^l(u) - F_m(u) \rp^2 d\hat F_n(u),\numberthis\label{eq:xi}
        \end{align*}
        and assume that the conditions in Assumptions \ref{assume:consistency}, and \ref{assume:regenerate} hold.
        Then, with $\delta_n$ as specified in the hypothesis of Theorem \ref{thm:marginal}, $\kappa^\star$, $\kappa$ are as in the statement of Theorem \ref{prop:talagrand-scaled}, $\kappa_n^\dagger$  be a large enough positive constant such that $\kappa_i(\rho,\lambda)\exp(-\kappa_i(\rho,\lambda)\kappa_n^\dagger\log x)<\epsilon/x$, for all $i\in 1,\dots, K_n+1$, and $\epsilon$ is as in \cref{eq:OcalN}, we have 
        \begin{align*}
            \sup_{\tau_{m-1}\leq k<l\leq\tau_{m}}  \xi_m(k,l)=\Ocal_p(u_n,K_n,\Cd), \text{ where }\ u_n:=8{\log(K_n \delta_n^2)\log \delta_n}.\numberthis\label{eq:un}
        \end{align*}
    \end{lemma}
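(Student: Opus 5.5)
Within a single regime $\tau_{m-1}\le k<l\le\tau_m$ the observations $X_k,\dots,X_l$ are a stretch of the $m$-th Markov chain. That chain satisfies Assumption \ref{assume:regenerate} with constant $\kappa_m$ (Assumption \ref{assume:regenerate-change}), so Theorem \ref{prop:talagrand-scaled} applies to every such stretch. The plan has four steps: reduce $\xi_m(k,l)$ to a squared Kolmogorov distance, apply the DKW-type bound segment by segment, take a union bound over all sub-intervals, and finally take a union bound over the $K_n+1$ regimes. The last union bound is what produces the factor $b_n=K_n$ in the $\Ocal_p(u_n,K_n,\Cd)$ notation.

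\textbf{Step 1 (reduction).} Because $\hat F_n$ is a probability measure,
\begin{align*}
\xi_m(k,l)\le n_{kl}\sup_{u\in[0,1]}\bigl(\hat F_k^l(u)-F_m(u)\bigr)^2 = n_{kl} Z_{kl}^2 ,
\end{align*}
where $Z_{kl}$ is the statistic $Z$ of Theorem \ref{prop:talagrand-scaled} built from the block $X_k,\dots,X_l$. Two minor points need checking here: the $\indicator[\cdot<t]$ versus $\indicator[\cdot\le t]$ discrepancy disappears because $F_m$ is continuous (Assumption \ref{assume:identifiability}), and the $n_{kl}$ versus $n_{kl}+1$ normalisation in $\hat F_k^l$ changes things only by $O(1/n_{kl})$.

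\textbf{Step 2 (one block).} Theorem \ref{prop:talagrand-scaled} with $t^2=\Cd u_n/n_{kl}$ gives
\begin{align*}
\prob\bigl(n_{kl}Z_{kl}^2>\Cd u_n\bigr)\le \kappa_m\exp\lp-\kappa_m\Cd u_n/\log n_{kl}\rp .
\end{align*}
There is a subtlety: the block need not start in stationarity or at the atom. I would handle this by noting that the constant in Theorem \ref{prop:talagrand-scaled} already depends on $\expec_v[\rho_A(2)]$ and $\rho_o$, which covers the first incomplete cycle. This is legitimate once we fix the initial law of the $m$-th chain.

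\textbf{Step 3 (sup over sub-intervals, the main obstacle).} The supremum runs over $O(n_m^2)$ pairs $(k,l)$, and a naive union bound would cost $\log n_m$, not $\log\delta_n$. I would split into short and long blocks.

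For long blocks with $n_{kl}\ge\delta_n$, use a dyadic peeling in $n_{kl}$. A geometric grid of blocks of length $2^j\ge\delta_n$ controls intermediate pairs, since moving an endpoint by a fraction of the length perturbs $n_{kl}Z^2_{kl}$ by a bounded relative amount (a Lévy/Ottaviani-type maximal step applied to the regeneration blocks, which are i.i.d.). Each scale then contributes a tail of order $\exp(-\kappa_m\Cd u_n/\log(2^j))$. With $u_n=8\log(K_n\delta_n^2)\log\delta_n$ and the choice of $\Cd$ in the statement, each term is at most $\epsilon/(K_n\delta_n^2 2^j)$-ish, and the sum over $j$ is $O(\epsilon/K_n)$.

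For short blocks with $n_{kl}<\delta_n$, the relevant union is over at most $\delta_n^2$ pairs near each location, and the denominator $\log n_{kl}\le\log\delta_n$. This is exactly where the factor $\log(K_n\delta_n^2)\log\delta_n$ in $u_n$ comes from.

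I expect the bookkeeping in this step to be the delicate part. In particular, the reduction from all $O(n^2)$ intervals to the effective count $\delta_n^2$ per scale has to be justified via the maximal inequality. That inequality rests on the i.i.d. regeneration cycles, with the exponential moment from Assumption \ref{assume:regenerate} controlling cycle lengths, so that endpoints can be moved to regeneration times at cost $O(\log n)$ in probability.

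\textbf{Step 4 (regimes).} Taking a union over $m\le K_n+1$ with $\Cd$ chosen uniformly over the $\kappa_i$ (that is, $\kappa_i\exp(-\kappa_i\Cd\log x)<\epsilon/x$) gives
\begin{align*}
K_n\,\prob\Bigl(\sup_{\tau_{m-1}\le k<l\le\tau_m}\xi_m(k,l)>\Cd u_n\Bigr)\le\epsilon
\end{align*}
in the limit, which is the claim.
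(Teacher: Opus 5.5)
\textbf{There is a genuine gap in your Step 3.}

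Your Steps 1 and 2 reproduce the paper's argument. The paper bounds $\xi_m(k,l)\le n_{kl}\|\hat F_k^l-F\|_\infty^2$; it writes $F$ there, but your $F_m$ is the centring that Theorem~\ref{prop:talagrand-scaled} actually delivers. It then applies that theorem to each pair at level $\Cd\log(K_n\delta_n^2)\max\{\log 2,\log n_{kl}\}$. Finally it takes a single union bound over the at most $\delta_n^2$ pairs with $\tau_{m-1}\le k<l\le\tau_{m-1}+\delta_n$, and by the choice of $\Cd$ each pair costs at most $\epsilon/(K_n\delta_n^2)$. So the factor $K_n$ in $\Ocal_p(u_n,K_n,\Cd)$ comes from the $\log K_n$ inside $u_n$, not from a union over regimes as your Step 4 says. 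The paper then asserts the bound for the full range $\tau_{m-1}\le k<l\le\tau_m$ with no further argument. Your Step 3 tries to supply exactly this extension, and it does not go through.

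\emph{Short blocks.} The phrase ``at most $\delta_n^2$ pairs near each location'' hides a union over roughly $(\tau_m-\tau_{m-1})/\delta_n\ge\beta_n/\delta_n\to\infty$ locations. No threshold depending only on $K_n$ and $\delta_n$, such as $u_n$, can absorb that factor.

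\emph{Long blocks.} At the fixed level $\Cd u_n$, Theorem~\ref{prop:talagrand-scaled} gives $\kappa_m\exp(-\kappa_m\Cd u_n/\log n_{kl})$. This worsens as $n_{kl}$ grows and is useless once $\log n_{kl}\gtrsim u_n$. You also need about $(\tau_m-\tau_{m-1})/2^j$ grid blocks at scale $2^j$. Together these mean your per-scale bound $\epsilon/(K_n\delta_n^2 2^j)$ cannot be obtained once $2^j$ exceeds a fixed power of $\delta_n$.

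\emph{The full-range claim is false.} No maximal inequality over regeneration cycles can fix this, because the statement fails under the hypotheses of Theorem~\ref{thm:marginal}. Take i.i.d.\ data (regeneration time $\equiv 1$), with $F_m$ uniform and every $F_k$ of full support. Split regime $m$ into disjoint blocks of length $\ell\approx c\log(\tau_m-\tau_{m-1})$ with $c\log 4<1$. With probability tending to one, some block has all its observations below $1/4$. On that block $\hat F_k^l\ge 1$ on $[1/4,1]$, so $\xi_m(k,l)$ is at least $\frac14 n_{kl}$ times the $\hat F_n$-mass of $[1/4,1/2]$, which is $\gtrsim\log\beta_n$. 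The hypotheses of Theorem~\ref{thm:marginal} allow, for example, $K_n$ bounded with $\delta_n$ constant or $\delta_n=\log\log\beta_n$. Then $u_n=o(\log\beta_n)$, and $K_n\prob(\sup_{k,l}\xi_m(k,l)>\Cd u_n)\to 1$.

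A threshold of size $u_n$ can only control a supremum over $O(\delta_n^2)$ pairs. That is your single-location short-block union bound, and it is what the paper actually proves. Covering whole regimes needs a threshold growing like $\log(\tau_m-\tau_{m-1})$, times the $\log n_{kl}$ loss in Theorem~\ref{prop:talagrand-scaled}. That larger threshold would then propagate into the rates of the later lemmas.
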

 Before stating the next result, we introduce $S_n(\tau_1',\dots,\tau_L')$ as the sum of the between time points $(\tau_1,\dots,\tau_L)$. Formally, 
    \begin{align*}
        S_n(\tau_1',\tau_2',\dots,\tau_L'):=(\tau_{i+1}'-\tau_i')\sum_{i=1}^{L-1}\int_{X_{(1)}}^{X_{(n)}} \hat F_{\tau_i'}^{\tau_{i+1}'}(u)\lp1-\hat F_{\tau_i'}^{\tau_{i+1}'}(u)\rp d\hat F_n(u)\numberthis\label{eq:def_EAUC}
    \end{align*}     
    Next, we require another Lemma which establishes the monotonicity property of the risk and is proved in \S \ref{sec:prf-monotne}.

    \begin{lemma}\label{lemma:monotne}
        Under Assumptions \ref{assume:consistency} and \ref{assume:regenerate}, and for any integer $s\in \{1,\dots,K_n\}$ and for all $L\geq 1$, such that $\tau_s<\tau_1'<\tau_2'<\dots<\tau_L'<\tau_{s+1}$,
        we have
        \begin{align*}
            0\geq R_n(\tau_s,\tau_1',\dots,\tau_L',\tau_{s+1})-R_n(\tau_s,\tau_{s+1}) = \Ocal_p(u_n\pow L,K_n,\Cd)
        \end{align*}
    where $u_n\pow L=8\Cd L{{\log(LK_n \delta_n^2)\log \delta_n}}$, and $\Cd$ is as in Lemma \ref{lemma:KLD-bound}.
    \end{lemma}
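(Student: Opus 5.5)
Write $G_k^l(u):=(l-k)\hat F_k^l(u)\bigl(1-\hat F_k^l(u)\bigr)$, so that the risk restricted to the window $[\tau_s,\tau_{s+1}]$ is a sum of such terms integrated against $d\hat F_n$. The plan is to reduce everything to the within-segment deviations $\xi_m(k,l)$ of Lemma~\ref{lemma:KLD-bound}, taking $m=s+1$. Within this window every observation comes from the single chain $X^{s+1}$, whose stationary law is $F_{s+1}$.

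The first step is an exact variance decomposition. For a fixed $u$, the quantity $(l-k)\hat F(1-\hat F)$ is the sum of squared deviations of the indicators $\indicator[X_j\le u]$ about their block mean. Merging blocks therefore adds a nonnegative between-block term. With $\tau_0'=\tau_s$ and $\tau_{L+1}'=\tau_{s+1}$, this gives
\begin{align*}
G_{\tau_s}^{\tau_{s+1}}(u)-\sum_{i=0}^{L}G_{\tau_i'}^{\tau_{i+1}'}(u)=\sum_{i=0}^{L}(\tau_{i+1}'-\tau_i')\bigl(\hat F_{\tau_i'}^{\tau_{i+1}'}(u)-\hat F_{\tau_s}^{\tau_{s+1}}(u)\bigr)^2\ \ge 0 .
\end{align*}
Integrating against $d\hat F_n$ immediately gives the left inequality $0\ge R_n(\tau_s,\tau_1',\dots,\tau_L',\tau_{s+1})-R_n(\tau_s,\tau_{s+1})$. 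I need to check the convention $\sum_{j=k}^{l}$ with divisor $l-k$, which has off-by-one endpoints; I will treat it as a half-open block so the identity is exact, or else absorb an $O(1)$ error.

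The second step bounds the between-block term. Since $(a-b)^2\le 2(a-F_{s+1})^2+2(b-F_{s+1})^2$, the integrated difference is at most
\begin{align*}
2\sum_{i=0}^{L}\xi_{s+1}(\tau_i',\tau_{i+1}')+2\,\frac{\sum_i(\tau_{i+1}'-\tau_i')}{\tau_{s+1}-\tau_s}\,\xi_{s+1}(\tau_s,\tau_{s+1}) .
\end{align*}
The last term equals $2\xi_{s+1}(\tau_s,\tau_{s+1})$. Alternatively, one can observe that the weighted sum of squared deviations about the pooled mean is at most the corresponding sum about $F_{s+1}$, which removes the factor $2$ entirely. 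Either way the difference is bounded by $C\sum_{i=0}^{L}\xi_{s+1}(\tau_i',\tau_{i+1}')$, a sum of $L+1$ terms. Each term is dominated by $\sup_{\tau_s\le k<l\le\tau_{s+1}}\xi_{s+1}(k,l)$.

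The third step, which I expect to be the main obstacle, is the probabilistic bookkeeping that turns this into $\Ocal_p(u_n\pow L,K_n,\Cd)$ with the stated $\log(LK_n\delta_n^2)$ factor. A crude union bound over the $L+1$ pieces, each controlled by Lemma~\ref{lemma:KLD-bound}, gives $(L+1)\Cd u_n$ with failure probability inflated by $L$. To obtain the stated form instead, I would rerun the Lemma~\ref{lemma:KLD-bound} argument with the union taken over $LK_n$ rather than $K_n$ events. That is, I apply Theorem~\ref{prop:talagrand-scaled} at threshold $t^2\asymp \Cd\log(LK_n\delta_n^2)\log\delta_n/n_{kl}$ to each block and choose $\Cd$ as in Lemma~\ref{lemma:KLD-bound}, so that $\kappa_i\exp(-\kappa_i\Cd\log x)<\epsilon/x$ with $x=LK_n\delta_n^2$. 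The total failure probability, multiplied by $K_n$, then stays below $\epsilon$. Summing the $L$ (really $L+1$, absorbed into the constant $8$) per-block bounds yields the factor $8\Cd L\log(LK_n\delta_n^2)\log\delta_n=u_n\pow L$.

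Two further points need care in this step. First, blocks with very short length $n_{kl}$ need separate treatment. For these, $\xi\le n_{kl}$ trivially, which is within the bound when $n_{kl}\lesssim\log\delta_n$. Second, the replacement of $d\hat F_n$ by its limit is covered by Assumption~\ref{assume:consistency}.
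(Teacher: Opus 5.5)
Your proposal follows the paper's proof essentially step for step, and it is correct at the paper's level of rigor. The paper also reduces the integrand to $-\sum_i n_i^\star\bigl(\hat F_{\tau_i'}^{\tau_{i+1}'}\bigr)^2+n^\star\bigl(\hat F_{\tau_s}^{\tau_{s+1}}\bigr)^2\le 0$. It then adds and subtracts the true cdf to get exactly your factor-free bound $-\sum_i\xi(\tau_i',\tau_{i+1}')$; your indexing by $F_{s+1}$ is the consistent one, where the paper writes $F_s$. It finishes with $\prob\bigl(\sum_i Z_i>a\bigr)\le\sum_i\prob\bigl(Z_i>a/(L+1)\bigr)$ and Lemma~\ref{lemma:KLD-bound} rerun at level $LK_n$.

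One caveat you share with the paper concerns long blocks, not short ones. Your threshold $t^2\asymp\Cd\log(LK_n\delta_n^2)\log\delta_n/n_{kl}$ yields failure probability $\epsilon/(LK_n\delta_n^2)$ only when $n_{kl}\le\delta_n$. This is because Theorem~\ref{prop:talagrand-scaled} carries $\log n_{kl}$ in its exponent, and the union in the proof of Lemma~\ref{lemma:KLD-bound} runs over only $\delta_n^2$ pairs. Since $\tau_{s+1}-\tau_s\ge\beta_n\gg\delta_n$, some block must be long. For those blocks the honest factor is $\log(\tau_{s+1}-\tau_s)$ in place of $\log\delta_n$, so they need extra care.
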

    
    The following lemma (proved in \S \ref{sec:prf-mntne3}) is a generalisation of the previous lemma.
    \begin{lemma}\label{lemma:monotone3}
        Under Assumptions \ref{assume:consistency} and \ref{assume:regenerate}, let $\tilde\tau_1<\dots<\tilde\tau_s$ be a collection of $s$ time points and let $\tau_1'<\dots<\tau_p'$ be a collection of $p$ time points. Then,
            \begin{align*}
                R_n(\tilde\tau_1,\dots,\tilde\tau_s)\geq R_n(\tilde\tau_1,\dots,\tilde\tau_s,\tau_1',\dots,\tau_p')
            \end{align*}
            where the risk for an unordered sequence of time points $\tau_1,\dots,\tau_s,\tau_1',\dots,\tau_p'$ is defined as in \cref{eq:risk_unordered}.
    \end{lemma}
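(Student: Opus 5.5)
The plan is to reduce Lemma \ref{lemma:monotone3} to a single elementary inequality: adding one point never increases the risk. The general statement then follows by induction on $p$. Since $R_n$ of an unordered tuple is defined through the order statistics (\cref{eq:risk_unordered}), adding $\tau_1',\dots,\tau_p'$ one at a time is the same as inserting them successively into the ordered configuration. So it suffices to show that
\begin{align*}
R_n(\tilde\tau_1,\dots,\tilde\tau_s)\geq R_n(\tilde\tau_1,\dots,\tilde\tau_s,\tau')
\end{align*}
for an arbitrary single point $\tau'$. If $\tau'$ coincides with an existing point, nothing changes. Otherwise $\tau'$ falls strictly inside exactly one segment $(a,b]$ of the current partition, where the endpoints $0$ and $n$ are included. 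All other summands of \cref{eq:risk} are untouched.

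The key step is a pointwise-in-$u$ comparison on the split segment. Fix $u$ and write $N_1,N_2$ for the counts of $X_j\le u$ in $(a,\tau']$ and $(\tau',b]$, with segment lengths $m_1,m_2$ and $m=m_1+m_2$. The segment's integrand before splitting is $m\,\hat F(1-\hat F)=N-N^2/m$, where $N=N_1+N_2$. After splitting it becomes $N_1-N_1^2/m_1+N_2-N_2^2/m_2$. The difference, old minus new, is
\begin{align*}
\frac{N_1^2}{m_1}+\frac{N_2^2}{m_2}-\frac{(N_1+N_2)^2}{m_1+m_2}=\frac{(m_2N_1-m_1N_2)^2}{m_1m_2(m_1+m_2)}\ge 0,
\end{align*}
which is Cauchy--Schwarz (equivalently, the within/between variance decomposition of a Bernoulli sample). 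Integrating this nonnegative difference against the positive measure $d\hat F_n$ over $[X_{(1)},X_{(n)}]$ gives the one-point inequality. The bound is deterministic, so it holds for every realisation. Assumptions \ref{assume:consistency} and \ref{assume:regenerate} are not actually needed; they only appear to match the setting of Lemma \ref{lemma:monotne}.

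The main obstacle is bookkeeping rather than analysis. The paper's empirical distribution $\hat F_{\tau_i'}^{\tau_{i+1}'}$ sums over $j=\tau_i',\dots,\tau_{i+1}'$, which is $\tau_{i+1}'-\tau_i'+1$ indices, yet it divides by $\tau_{i+1}'-\tau_i'$. I would adopt the half-open convention $(\tau_i',\tau_{i+1}']$, so that the counts add exactly across the split ($N=N_1+N_2$, $m=m_1+m_2$) and the identity above applies verbatim. Two further points need care: boundary segments, where I include $\tau_0=0$ and $\tau_{L+1}=n$ as in \cref{eq:risk}, and segments of length one. Both are consistent with the algebra as long as $m_1,m_2\ge1$. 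Finally, the induction on $p$ is immediate: each step inserts one new point into the current ordered configuration and applies the one-point inequality.
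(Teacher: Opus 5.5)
Your proposal is correct and follows essentially the paper's argument: the paper also compares, pointwise in $u$, $\sum_k n^\star_{j,k}\hat F_{J_k}(1-\hat F_{J_k})$ with $n^\star_J\hat F_J(1-\hat F_J)$ through the variance identity for the concave map $x\mapsto x(1-x)$, then integrates against the nonnegative measure $d\hat F_n$; the only difference is that it splits each coarse interval into all of its pieces at once, with weights $w_{j,k}$, where you insert one point at a time and induct. Your half-open convention is also what the paper's proof tacitly relies on, since it uses $\sum_k n^\star_{j,k}=n^\star_J$ and $\bar p_j=\hat F_J$, and neither holds with the literal inclusive sums.
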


    Since
    \begin{align*}
        S_n(\tau_s,\tau_1',\dots,\tau_L',\tau_{s+1})-S_n(\tau_s,\tau_{s+1})=R_n(\tau_s,\tau_1',\dots,\tau_L',\tau_{s+1})-R_n(\tau_s,\tau_{s+1})
    \end{align*}
    the following Lemma can be proved as a corollary to Lemma \ref{lemma:monotne}.
    
    \begin{lemma}\label{lemma:monotone2} 
        Under Assumptions \ref{assume:consistency} and \ref{assume:regenerate}, and for any integer $s\in \{1,\dots,K_n\}$ and for all $L\geq 1$, such that $\tau_s<\tau_1'<\tau_2'<\dots<\tau_L'<\tau_{s+1}$,
        we have
        \begin{align*}
            0\geq S_n(\tau_s,\tau_1',\dots,\tau_L',\tau_{s+1})-S_n(\tau_s,\tau_{s+1}) = \Ocal_p(u_n\pow L,K_n,\Cd)
        \end{align*}
        where $u_n\pow L=8\Cd L{{\log(LK_n \delta_n^2)\log \delta_n}}$, and $\Cd$ is as in Lemma \ref{lemma:KLD-bound}.
    \end{lemma}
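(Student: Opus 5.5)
The plan is to reduce Lemma \ref{lemma:monotone2} to Lemma \ref{lemma:monotne} through the identity displayed just before the statement,
\begin{align*}
S_n(\tau_s,\tau_1',\dots,\tau_L',\tau_{s+1})-S_n(\tau_s,\tau_{s+1})=R_n(\tau_s,\tau_1',\dots,\tau_L',\tau_{s+1})-R_n(\tau_s,\tau_{s+1}).
\end{align*}
Once this identity is justified, the sign statement and the $\Ocal_p(u_n\pow L,K_n,\Cd)$ bound transfer verbatim from Lemma \ref{lemma:monotne}. That lemma has the same hypotheses ($\tau_s<\tau_1'<\dots<\tau_L'<\tau_{s+1}$, together with Assumptions \ref{assume:consistency} and \ref{assume:regenerate}) and the same rate $u_n\pow L$ with the same constant $\Cd$.

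To justify the identity, I will write both $R_n$ and $S_n$ as sums over consecutive segments of the per-segment quantity
\begin{align*}
(\tau_{i+1}'-\tau_i')\int_{X_{(1)}}^{X_{(n)}}\hat F_{\tau_i'}^{\tau_{i+1}'}(u)\bigl(1-\hat F_{\tau_i'}^{\tau_{i+1}'}(u)\bigr)\,d\hat F_n(u),
\end{align*}
reading the prefactor in \cref{eq:def_EAUC} as sitting inside the sum. The difference between $S_n$ and $R_n$ is only in which boundary segments are included. $R_n$ in \cref{eq:risk} sums from $i=0$ and so includes the segment ending at the first listed point. $S_n$ sums only the segments between listed points.

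Now compare the two configurations, $(\tau_s,\tau_1',\dots,\tau_L',\tau_{s+1})$ and $(\tau_s,\tau_{s+1})$. They share the same outer endpoints $\tau_s$ and $\tau_{s+1}$. Every segment lying outside $[\tau_s,\tau_{s+1}]$ therefore appears identically in both configurations, whether it is counted (in $R_n$) or not (in $S_n$). Those terms cancel in each difference. What survives in both differences is the same quantity: the sum over the $L+1$ pieces of $[\tau_s,\tau_{s+1}]$, minus the single term for $[\tau_s,\tau_{s+1}]$. This gives the identity.

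The second step is to invoke Lemma \ref{lemma:monotne}. It yields $0\geq R_n(\tau_s,\tau_1',\dots,\tau_L',\tau_{s+1})-R_n(\tau_s,\tau_{s+1})$, and the same difference is $\Ocal_p(u_n\pow L,K_n,\Cd)$. Because the two differences are equal as random variables, the event $\{|\cdot|>\Cd u_n\pow L\}$ is the same for both. The limit in \cref{eq:OcalN}, with weight $b_n=K_n$, therefore carries over unchanged. If one prefers a direct argument for the sign, Lemma \ref{lemma:monotone3} also gives nonnegativity of the refinement gain for $S_n$, since $S_n$ is a subsum that differs by the common boundary terms.

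The main obstacle is purely notational. The definition in \cref{eq:def_EAUC} puts the factor $(\tau_{i+1}'-\tau_i')$ outside the sum and indexes the sum from $1$ rather than $0$. I would first fix the intended reading, with the segment length inside the sum and only the segments between listed points counted. Then I would check that the boundary segments are genuinely identical in both configurations, in particular that the empirical $\hat F$ over each shared segment uses the same index range. With that settled, the rest is a direct citation of Lemma \ref{lemma:monotne}.
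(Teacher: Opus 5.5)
Your proposal is correct and follows the paper's route exactly. The paper states the identity $S_n(\tau_s,\tau_1',\dots,\tau_L',\tau_{s+1})-S_n(\tau_s,\tau_{s+1})=R_n(\tau_s,\tau_1',\dots,\tau_L',\tau_{s+1})-R_n(\tau_s,\tau_{s+1})$ and reads off the lemma as a corollary of Lemma \ref{lemma:monotne}; your argument that the segments outside $[\tau_s,\tau_{s+1}]$ cancel supplies the justification for that identity, which the paper asserts without proof.
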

    
    We move on to stating our next result which proves that under Assumptions \ref{assume:consistency}, \ref{assume:identifiability}, and \ref{assume:regenerate}, we detect at least the correct number of change points.

    \begin{lemma}\label{lemma:BIC-minimalrecovery}
        Under Assumptions \ref{assume:consistency}, \ref{assume:identifiability}, and \ref{assume:regenerate}, and the hypothesis of Theorem \ref{thm:marginal}, we have $\prob(\hat K_n\geq K_n)\rightarrow 1$.
    \end{lemma}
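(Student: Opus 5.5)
To prove $\prob(\hat K_n\geq K_n)\to1$ we show that, with probability tending to one, every $L<K_n$ satisfies $\mathrm{BIC}_L>\mathrm{BIC}_{K_n}$. We compare the minimal risk with $L$ points against the risk at the true change points $(\tau_1,\dots,\tau_{K_n})$. A union bound over $L\in\{0,\dots,K_n-1\}$ costs at most a factor $K_n$. This factor is exactly what the $\Ocal_p(\cdot,K_n,\Cd)$ notation of Lemmas \ref{lemma:KLD-bound} and \ref{lemma:monotne} absorbs.

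\textbf{Upper bound for $\mathrm{BIC}_{K_n}$.} We have $\mathrm{BIC}_{K_n}\le R_n(\tau_1,\dots,\tau_{K_n})+K_n\zeta_n$. Within a true segment $(\tau_{m-1},\tau_m]$ we expand $\hat F(1-\hat F)=F_m(1-F_m)+(\hat F-F_m)(1-2F_m)-(\hat F-F_m)^2$. Summed over segments, this gives $R_n(\tau_1,\dots,\tau_{K_n})=\sum_m n_m\int F_m(1-F_m)d\hat F_n+\text{(fluctuation)}$. The fluctuation part is controlled by Lemma \ref{lemma:KLD-bound} together with Theorem \ref{prop:talagrand-scaled}, and is of order $K_n u_n$ up to $\sqrt{n_m\,\xi_m}$ cross terms. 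In what follows only differences matter, and these cross terms appear identically on the other side.

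\textbf{Lower bound for any $L<K_n$.} Fix any $\tau_1'<\dots<\tau_L'$. Since $L<K_n$, by pigeonhole there is a true change point $\tau_s$ such that no $\tau_j'$ lies within $\beta_n/2$ of it. Hence some estimated segment $I=(\tau_{j}',\tau_{j+1}']$ contains $[\tau_s-\beta_n/2,\tau_s+\beta_n/2]$. By Lemma \ref{lemma:monotone3} we may add all true change points to $\{\tau_j'\}$ except $\tau_s$, and this only lowers the risk. So
\[
R_n(\tau_1',\dots,\tau_L')\ge R_n(\tau'\cup\{\tau_k\}_{k\neq s})\ge R_n(\tau'\cup\{\tau_k\}_{k})-\text{(gain from splitting at $\tau_s$)}.
\]
The key computation is the gain from splitting the merged segment $J\ni\tau_s$ at $\tau_s$. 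For a merged segment with sizes $a,b\ge\beta_n/2$, the population identity gives
\[
(a+b)\bar F(1-\bar F)-aF_s(1-F_s)-bF_{s+1}(1-F_{s+1})=\tfrac{ab}{a+b}(F_s-F_{s+1})^2 .
\]
After integrating against $d\hat F_n\to dF$ (Assumption \ref{assume:consistency}), this is at least $\tfrac{\beta_n}{4}\eta_{\min}(1+o(1))$. The empirical deviations are $\Ocal_p$ of order $u_n$ plus cross terms $\sqrt{\beta_n u_n}$ by Lemma \ref{lemma:KLD-bound}. The spurious points of $\tau'$ lying inside true segments change the risk by only $\Ocal_p(u_n^{(L)},K_n,\Cd)$, by Lemma \ref{lemma:monotne}. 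Putting these together,
\[
\mathrm{BIC}_L-\mathrm{BIC}_{K_n}\ge \tfrac{\beta_n}{4}\eta_{\min}-C K_n u_n^{(K_n)}-C\sqrt{\beta_n K_n u_n}-K_n\zeta_n .
\]

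\textbf{Conclusion and main obstacle.} With $\zeta_n=\kappa_n\bar K^3(\log\bar K)^2(\log\beta_n)^2/\beta_n$, the penalty term $K_n\zeta_n$ is $o(\beta_n)$. The growth conditions $K_n^3(\log K_n)^2(\log\delta_n)^2/\delta_n=\Ocal(1)$ and $\delta_n/\beta_n\to0$ make $K_n u_n^{(K_n)}=o(\beta_n)$, and hence also $\sqrt{\beta_nK_nu_n}=o(\beta_n)$. Therefore the right-hand side is positive with probability tending to one, uniformly over $L<K_n$. The step I expect to be hardest is the uniformity of the lower bound over all configurations $\tau'$. The plan is to take the supremum in Lemma \ref{lemma:KLD-bound} over all subintervals of each true segment, and to handle the segment straddling $\tau_s$ by decomposing it into its two within-segment pieces. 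Each piece is then controlled by Lemma \ref{lemma:KLD-bound}, so no extra union bound over the $\binom{n}{L}$ configurations is needed.
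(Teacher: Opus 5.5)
Your argument is correct in substance. It follows the paper's skeleton: pigeonhole to find an uncovered true change point, refine with Lemma~\ref{lemma:monotone3}, extract the gain from splitting at the missed point, and control the rest with Lemmas~\ref{lemma:KLD-bound} and~\ref{lemma:monotne}. The bookkeeping, however, differs in ways that matter.

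The paper works at scale $\delta_n$. It uses $\Bcal_r(L,\delta_n)$ and inserts auxiliary points $\tau_r\pm\delta_n$, so the straddling cell is the window $[\tau_r-\delta_n,\tau_r+\delta_n]$. It extracts a signal of order $\delta_n\int\eta_r\,dF$ through the add-and-subtract term $\Hcal$, then runs an induction on $L$ followed by a union bound over $L$.

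You work at scale $\beta_n/2$, which is legitimate because the true change points are $\beta_n$-separated. You keep the whole straddling cell and use the exact identity, valid verbatim for empirical cdfs, that the splitting gain is $\frac{ab}{a+b}\int(\hat F_A-\hat F_B)^2\,d\hat F_n$. This buys:
\begin{itemize}
\item A signal of order $\beta_n\eta_{\min}$, so the final comparison only needs $K_nu_n^{(K_n)}\lesssim\delta_n=o(\beta_n)$, a genuine little-$o$. The paper instead compares $\delta_n\eta_{\min}$ with $K_n^2u_n^{(K_n)}$; at least when $K_n$ stays bounded, that needs the implicit constant in the $\Ocal(1)$ rate hypothesis to be favourable.
\item The correct sign for the penalty. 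For $L<K_n$ the term $(L-K_n)\zeta_n$ works against $\mathrm{BIC}_L>\mathrm{BIC}_{K_n}$, so one needs $K_n\zeta_n=o(\beta_n)$, which the explicit $\zeta_n$ satisfies. The paper's Step~II lets this term enter with the favourable sign.
\end{itemize}
What the paper's short window buys is that its signal computation involves only pieces of length $\delta_n$ adjacent to a change point. That is closer to what the proof of Lemma~\ref{lemma:KLD-bound} actually establishes (it only treats $k<l\le\tau_{m-1}+\delta_n$), and it dovetails with the localization claim $\Gcal_n(K_n)\in\Ccal_{K_n}(\delta_n)$. Your long pieces $A,B$ need the full-segment supremum in Lemma~\ref{lemma:KLD-bound}. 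The paper relies on that same uniform version for the other segments via Lemma~\ref{lemma:monotone2}, so you assume nothing extra relative to it.

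Three small repairs are needed.
\begin{itemize}
\item Your displayed chain should read $R_n(\tau'\cup\{\tau_k\}_{k\neq s})=R_n(\tau'\cup\{\tau_k\}_{k})+(\text{gain})$, with a plus sign.
\item The cross term is really $\sqrt{G\,u_n}$ with $G=\frac{ab}{a+b}\int(F_s-F_{s+1})^2\,d\hat F_n$, which can far exceed $\beta_n$ when $a,b$ are long. This is harmless since $G\gg u_n$. You can also avoid cross terms entirely via $(x+y)^2\ge x^2/2-y^2$, which bounds the gain below by $G/2-2\xi_s(A)-2\xi_{s+1}(B)$, with $\xi$ as in Lemma~\ref{lemma:KLD-bound} evaluated on the two pieces.
\item The factor $K_n$ in $\Ocal_p(\cdot,K_n,\Cd)$ pays for only one union bound. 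Spend it on the $K_n+1$ true segments, so that the single event $\{\sup_m\sup_{\tau_{m-1}\le k<l\le\tau_m}\xi_m(k,l)\le\Cd u_n\}$ controls every configuration and every $L<K_n$ simultaneously. Spending it on $L$, as your opening paragraph proposes, would leave you a factor $K_n$ short.
\end{itemize}
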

    \subsection{Proof of Lemma \ref{lemma:BIC-minimalrecovery}}\label{sec:prf-BICminimal}
    \begin{proof}
    The proof of this lemma is divided into two steps. We will first prove a deviation bound of $\operatorname{BIC_{K_n}}$ from $\operatorname{BIC_L}$. Then the rest of the proof follows using an union bound on the events of incurring an error for every $1<L<K_n$. We begin with the first step.
    
    \textbf{Step I:} We will first prove the following statement via induction: For any $L<K_n$
    \begin{align*}
            \mathrm{BIC_L}-\mathrm{BIC_{K_n}}\geq 3(K_n-L)\eta_{\min{}}-(K_n-L)(K_n+5)\Ocal_p(u_n\pow{K_n},K_n,\Cd)-(K_n-L)\zeta_n.\numberthis\label{eq:bicinddiff}
        \end{align*}
        For $r=1,\dots,K_n$ let $\Bcal_r(L,\delta_n)$ be the set of all estimated change points for which at least one $\tau_r$ is $\delta_n$ away from every estimate. Formally, we define 
        \begin{align*}
            \Bcal_r(L,\delta_n) & := \lc (\tau_1',\dots,\tau_L'): \tau_1'<\dots<\tau_L' \text{ and } |\tau_s'-\tau_r|>\delta_n \forall\ 1\leq s\leq L \rc.\numberthis\label{eq:estimation_ball}
        \end{align*}
        For $L = K_n-1$, by pigeon-hole principle, there exists at least one $\tau_r$ such that $|\tau_i'-\tau_r|$ is large for all $i$. Therefore, the estimated change points $(\hat \tau_1,\dots,\hat \tau_L)\in \Bcal_r(L,\delta_n)$ for some $r$. 
        
        Let $(\tau_1',\dots,\tau_L')$ be any element of $\Bcal_r(L,\delta_n)$. Let $\tau_{K_n+1}=n$ and for $i=0,\dots,r-1,r+2,\dots,K_n$ let  $\{\tau_{i,1}',\dots,\tau_{i,\max{}}'\}$ be the largest ordered subset of $\{\tau_1',\dots, \tau_L'\}$ with all values between $\tau_{i}$ and $\tau_{i+1}$. If the set $\{\tau_{i,1}',\dots,\tau_{i,\max{}}'\}$ is empty, we trivially define $\tau_{i,1}'$ as $\tau_{i+1}$ and $\tau_{i,\max{}}'$ as $\tau_{i-1}$. We define
        \begin{align*}
            T_i & := S_n(\tau_{i},\tau_{i,1}',\dots,\tau_{i,\max{}}',\tau_{i+1})\\
            T_r & :=  S_n(\tau_{r-1},\tau_{r-\delta_n})\\
            T_{r+1} & := S_n(\tau_{r+\delta_n},\tau_{r+1})
        \end{align*}
         and $T_{K_{n}+2}$ as defined below in \cref{eq:TK-decomp}. Now, using Lemma \ref{lemma:monotone3}, we have
        \begin{equation}\label{triangle_Rn}
        \begin{aligned}
          R_n(\tau_1',\dots,\tau_L') & \geq R_n(\tau_1',\dots,\tau_L',\tau_1,\dots,\tau_{r-1},\tau_{r-\delta_n},\tau_{r+\delta_n},\tau_{r+1},\dots,\tau_{K_n})\\
            & = T_0+T_1+\dots+T_{K_n+2}
        \end{aligned}
       \end{equation}
with the last equality following from the definition of $T_i$'s.
        It follows using Lemma \ref{lemma:monotone2} that for all $i=0,\dots,r-1,r+2,\dots,K_n$
        \begin{align*}
           S_n(\tau_i,\tau_{i+1}) \geq T_{{i}}\geq S_n(\tau_i,\tau_{i+1})+\Ocal_p(u_n\pow{K_n},K_n,\Cd). 
        \end{align*}

        It follows by trivially subtracting $\Ocal_p$ terms that,
        \begin{align*}
            T_r & \geq S_n(\tau_{r-1},\tau_{r-\delta_n})+\Ocal_p(u_n\pow{K_n},K_n,\Cd)\\
            T_{r+1} & \geq  S_n(\tau_{r+\delta_n},\tau_{r+1})+\Ocal_p(u_n\pow{K_n},K_n,\Cd).
        \end{align*}
           Finally, we are left with $T_{K_n+2}$ which is 
           \begin{align}
        T_{K_{n+2}}
        &= S_n(\tau_r- \delta_n,\tau_r+ \delta_n)+S_n(\tau_r+ \delta_n,\tau_{r+1}) \notag\\
        &= S_n(\tau_r- \delta_n,\tau_r)+S_n(\tau_r,\tau_r+ \delta_n)+\Delta_S, \label{eq:TK-decomp}
        \end{align}
        where (with $F_{r-1,1/2}$ as defined in \cref{eq:frhalf})
        \begin{equation}
            F_{r-1,1/2}(u) = \frac{F_{r-1}(u)+F_r(u)}{2}.\label{eq:frhalf}
        \end{equation}
        \begin{align*}
        \Delta_S
        & := S_n(\tau_r- \delta_n,\tau_r+ \delta_n)
              -S_n(\tau_r- \delta_n,\tau_r)
              -S_n(\tau_r,\tau_r+ \delta_n) \notag\\
        & = 2 \delta_n \int_{X_{(1)}}^{X_{(n)}} 
              \hat F_{\tau_r- \delta_n}^{\,\tau_r+ \delta_n}(u)
              \Bigl(1-\hat F_{\tau_r- \delta_n}^{\,\tau_r+ \delta_n}(u)\Bigr)\,
              d\hat F_n(u) \notag\\
        &\quad- \delta_n \int_{X_{(1)}}^{X_{(n)}} 
              \hat F_{\tau_r- \delta_n}^{\,\tau_r}(u)
              \Bigl(1-\hat F_{\tau_r- \delta_n}^{\,\tau_r}(u)\Bigr)\,
              d\hat F_n(u) \notag\\
        &\quad- \delta_n \int_{X_{(1)}}^{X_{(n)}} 
              \hat F_{\tau_r}^{\,\tau_r+ \delta_n}(u)
              \Bigl(1-\hat F_{\tau_r}^{\,\tau_r+ \delta_n}(u)\Bigr)\, d\hat F_n(u) \numberthis\label{eq:DeltaS}\\
        & = -  \int_{X_{(1)}}^{X_{(n)}} \lb
              2 \delta_n(\hat F_{\tau_r- \delta_n}^{\,\tau_r+ \delta_n}(u))^2 -  
               \delta_n(\hat F_{\tau_r- \delta_n}^{\,\tau_r}(u))^2
               -  
               \delta_n(\hat F_{\tau_r}^{\,\tau_r+ \delta_n}(u))^2\, \rb d\hat F_n(u)\\
        & =-  \int_{X_{(1)}}^{X_{(n)}} \lb
              2 \delta_n(\hat F_{\tau_r- \delta_n}^{\,\tau_r+ \delta_n}(u))^2 -  
               \delta_n(\hat F_{\tau_r- \delta_n}^{\,\tau_r}(u))^2 - \delta_n(\hat F_{\tau_r}^{\,\tau_r+ \delta_n}(u))^2\, +\Hcal-\Hcal \rb d\hat F_n(u)\numberthis\label{eq:signalequation1}
\end{align*}
where (with the dependence on $u$ implicit for convenience)
\begin{align*}
    \Hcal & =  - 4\delta_n\hat F_{\tau_r- \delta_n}^{\,\tau_r+ \delta_n} F_{r-1,1/2}+2\delta_n (F_{r-1,1/2})^2  \\
    & \quad + 2\delta_n\hat F_{\tau_r- \delta_n}^{\,\tau_r} F_{r-1}-\delta_n (F_{r-1})^2\\
    & \quad + 2\delta_n\hat F_{\tau_r}^{\,\tau_r + \delta_n} F_{r}-\delta_n (F_{r})^2\\
    & = \delta_n \lp \hat F_{\tau_r- \delta_n}^{\,\tau_r}-\hat F_{\tau_r}^{\,\tau_r+ \delta_n}\rp\lp F_{r-1}-F_r \rp\\
    & \xrightarrow{n\rightarrow\infty}\delta_n \lp F_{r-1}-F_r \rp^2+\ocal(\delta_n)\label{eq:signalequation2}\numberthis
\end{align*}
uniformly in $u$.

Now using Lemma \ref{sec:prf_KLDbd}, the right hand side of \cref{eq:signalequation1} becomes
\begin{align*}
   &  - \! \int_{X_{(1)}}^{X_{(n)}}\! \lb
              2 \delta_n(\hat F_{\tau_r- \delta_n}^{\,\tau_r+ \delta_n}(u)\!-\!F_{r-1,1/2}(u))^2\!-\!  
               \delta_n(\hat F_{\tau_r- \delta_n}^{\,\tau_r}(u)\!-\!F_{r-1}(u))^2\!-\! \delta_n(\hat F_{\tau_r}^{\,\tau_r+ \delta_n}(u)\!-\!F_r(u))^2\!-\!\Hcal\rb\! d\hat F_n\!(u)\!\\
       & =3\Ocal_p(u_n\pow{K_n},K_n,\Cd)-\int_{X_{(1)}}^{X_{(n)}} \Hcal d\hat F_n(u).
\end{align*}
Observe that $d\hat F_n(u) = 0$ for all $u\notin [X_{(1)},X_{(n)}]$ which, using implies Assumption \ref{assume:consistency}, and \cref{eq:signalequation2}
\begin{align*}
    \int_{X_{(1)}}^{X_{(n)}} \Hcal d\hat F_n & = \int_{0}^{1} \Hcal d\hat F_n\\
    & \xrightarrow{n\rightarrow\infty} \delta_n\int_{0}^{1}  \lp F_{r-1}(u)-F_r(u) \rp^2 dF(u)+\ocal_p(\delta_n)\\
    & = \delta_n \int \eta_r(u)dF(u) +\ocal_p(\delta_n).
\end{align*}       

        Now it follows by substituting the previous bounds in \cref{triangle_Rn} that,
        \begin{align*}
            & \min_{(\tau_1',\dots,\tau_L')\in B_r(L,\delta_n)} R_n(\tau_1',\dots,\tau_L') \\
            & \quad \geq\min_{(\tau_1',\dots,\tau_L')\in B_r(L,\delta_n)} R_n(\tau_1',\dots,\tau_L',\tau_1,\dots,\tau_{r-1},\tau_{r-\delta_n},\tau_{r+\delta_n},\tau_{r+1},\dots,\tau_{K_n})\\
            & \quad \geq \, R_n(\tau_1,\dots,\tau_{K_n})+{(K_n+5)}\Ocal_p(u_n\pow{K_n},K_n,\Cd)-\delta_n\int \eta_r(u)dF(u)+\ocal_p(\delta_n).
        \end{align*}
         
        Let $\mathrm{BIC_{K_n}}= - R_n(\tau_1,\dots,\tau_{K_n})+K_n\zeta_n$ and recall from \cref{eq:bic} the definition of $\mathrm{BIC_L}$.  
        Following the calculations above, we get 
        \begin{align*}
            \mathrm{BIC_L}-\mathrm{BIC_{K_n}}&\leq -\delta_n\int \eta_r(u)dF(u)-(K_n+5)\Ocal_p(u_n^{(K_n)},K_n,\Cd)-\zeta_n\\
            & \leq -\delta_n\eta_{\min{}}-(K_n+5)\Ocal_p(u_n^{(K_n)},K_n,\Cd)-\zeta_n \numberthis\label{eq:bicdiff}
        \end{align*}
        where the second inequality follows from the definition of $\eta_{\min{}}$ in Assumption \ref{assume:identifiability}. This proves our induction hypothesis for $L=K_n-1$.
        
Now let the hypothesis of the induction (eq. (\ref{eq:bicinddiff})) hold true for $L=K_n-r$. Reparametrizing $K_n-r$ as $K_n'$, we now show for $L=K_n'-1$. It is easy to see invoking \cref{eq:bicdiff} that,
        \begin{align*}
            \mathrm{BIC_{L}}-\mathrm{BIC_{K_n'}}
            & \leq -\delta_n\eta_{\min{}}-(K_n+5)\Ocal_p(u_n^{(K_n')},K_n',\Cd)-\zeta_n \\
            & \leq -\delta_n\eta_{\min{}}-(K_n+5)\Ocal_p(u_n^{(K_n)},K_n,\Cd)-\zeta_n
        \end{align*}
        where inequality follows since for $K_n'<K_n$, $|\Ocal_p(u_n^{(K_n')},K_n',\Cd)|\leq |\Ocal_p(u_n^{(K_n)},K_n,\Cd)|$.
        Therefore using induction,
        \begin{align*}
            \mathrm{BIC_{L+1}}-\mathrm{BIC_{K_n}} & = \mathrm{BIC_{L+1}}-\mathrm{BIC_L}+\mathrm{BIC_L}-\mathrm{BIC_{K_n}} \\
            & \leq -\delta_n(K_n-r+1)\eta_{\min{}}-K_n(K_n+5)\Ocal_p(u_n^{(K_n)},K_n,\Cd)-(K_n-r+1)\zeta_n
        \end{align*}
        which is what we required.
    
        \textbf{Step II:} It now follows that, for any $n$,
        \begin{align*}
            \prob(\hat K_n<K_n) & =\prob\lp \bigcup_{L=1}^{K_n-1}\lc \mathrm{BIC_L}>\mathrm{BIC_{K_n}}\rc \rp\\
            & \leq \sum_{L=1}^{K_n-1}\prob\lp \mathrm{BIC_L}>\mathrm{BIC_{K_n}} \rp\\
            & {\leq}   \sum_{L=1}^{K_n-1} \prob\lp K_n(K_n+5) \Ocal_p(u_n^{(K_n)},K_n,\Cd)<-\delta_n\eta_{\min{}} - (K_n-L)\zeta_n \rp\\
            & =\sum_{L=1}^{K_n-1} \prob\lp  \Ocal_p(u_n^{(K_n)},K_n,\Cd)<-\frac{\delta_n}{5K_n^2}\eta_{\min{}} - (K_n-L)\frac{\zeta_n}{5K_n^2} \rp\\
            & =\sum_{L=1}^{K_n-1} \prob\lp  \Ocal_p(u_n^{(K_n)},K_n,\Cd)<-\frac{u_n\pow {K_n} \delta_n\log K_n}{40\kappa_n^\dagger K_n^3(\log \delta_n)^2(\log K_n)^2}\eta_{\min{}} - (K_n-L)\frac{\zeta_n}{5K_n^2} \rp
        \end{align*}
        where the last inequality follows by substituting the terms from \cref{eq:bicdiff}.

        Recall that under the hypothesis of the theorem, 
        \[K_n^3\log( \delta_n)^2(\log K_n)^2/\delta_n=\Ocal(1), \text{\quad  so that \quad } \delta_n/K_n^3(\log \delta_n)^2(\log K_n)^2=\Omega(1).\] 
        Therefore, for some large $n$, we have \[\frac{ \delta_n\log K_n}{40\kappa_n^\dagger K_n^3\log( \delta_n)^2(\log K_n)^2}\eta_{\min{}}>1\] This implies

        \begin{align*}
            & \sum_{L=1}^{K_n-1} \prob\lp  \Ocal_p(u_n^{(K_n)},K_n,\Cd)<-\frac{u_n\pow {K_n} \delta_n\log K_n}{5K_n^3\log( \delta_n)^2(\log K_n)^2}\eta_{\min{}} - (K_n-L)\frac{\zeta_n}{5K_n^2} \rp\\
            & \quad \leq \sum_{L=1}^{K_n-1} \prob\lp  \Ocal_p(u_n^{(K_n)},K_n,\Cd)<-u_n\pow{K_n} - (K_n-L)\frac{\zeta_n}{5K_n^2} \rp\numberthis\label{eq:onetamin}\\
            & \quad \leq \sum_{L=1}^{K_n-1} \prob\lp  \Ocal_p(u_n^{(K_n)},K_n,\Cd)<-u_n\pow{K_n} \rp\\
            & \quad = K_n  \prob\lp  \Ocal_p(u_n^{(K_n)},K_n,\Cd)<-u_n\pow{K_n} \rp\\
            & \quad < \epsilon.
        \end{align*}
        This completes the proof.
    \end{proof}
    The next lemma establishes that no more than the correct number of change points is detected.
    \begin{lemma}\label{lemma:BIC-exactrecovery}
            Under Assumptions \ref{assume:consistency}, \ref{assume:identifiability}, and \ref{assume:regenerate}, and the hypothesis of Theorem \ref{thm:marginal}, $\prob\lp \hat K_n>K_n \rp\rightarrow 0$.
    \end{lemma}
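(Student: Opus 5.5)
We argue by a union bound over $L\in\{K_n+1,\dots,\bar K\}$, showing that for each such $L$ the event $\{\mathrm{BIC_L}\le \mathrm{BIC_{K_n}}\}$ has small probability. The point is that extra change points can lower the risk only by a stochastically small amount, while each costs a deterministic penalty $\zeta_n$. The comparison is made with the true configuration $(\tau_1,\dots,\tau_{K_n})$, since $\mathrm{BIC_{K_n}}\le R_n(\tau_1,\dots,\tau_{K_n})+K_n\zeta_n$.

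\textbf{Step 1 (reduction to superfluous points inside true segments).} Fix $L>K_n$ and any $\tau_1'<\dots<\tau_L'$. By Lemma \ref{lemma:monotone3},
\[
R_n(\tau_1',\dots,\tau_L')\ge R_n(\tau_1',\dots,\tau_L',\tau_1,\dots,\tau_{K_n}).
\]
The merged configuration splits into the true segments $[\tau_s,\tau_{s+1}]$, each refined by the $L_s$ estimated points falling strictly inside it, with $\sum_s L_s\le L$. The risk is additive over segments, as in the decomposition into the $T_i$ in the proof of Lemma \ref{lemma:BIC-minimalrecovery}. Hence the right-hand side equals
\[
R_n(\tau_1,\dots,\tau_{K_n})+\sum_{s=0}^{K_n}\Bigl[R_n(\tau_s,\dots,\tau_{s+1})-R_n(\tau_s,\tau_{s+1})\Bigr].
\]
By Lemma \ref{lemma:monotne} (equivalently Lemma \ref{lemma:monotone2}), each bracket is $\Ocal_p(u_n^{(L_s)},K_n,\Cd)$. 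Summing over at most $K_n+1$ nonempty segments and using monotonicity of $u_n^{(L)}$ in $L$, we obtain, uniformly over configurations,
\[
\min_{\tau'}R_n(\tau')\ge R_n(\tau_1,\dots,\tau_{K_n})-(K_n+1)\,\Ocal_p(u_n^{(\bar K)},K_n,\Cd).
\]
The bound is uniform because Lemma \ref{lemma:KLD-bound} controls $\sup_{k<l}\xi_m(k,l)$ within each segment. Points lying exactly on some $\tau_s$ only reduce $L_s$ and cause no difficulty.

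\textbf{Step 2 (penalty dominates).} From Step 1,
\[
\mathrm{BIC_L}-\mathrm{BIC_{K_n}}\ge (L-K_n)\zeta_n-(K_n+1)\,\Ocal_p(u_n^{(\bar K)},K_n,\Cd).
\]
Therefore
\[
\prob(\mathrm{BIC_L}\le\mathrm{BIC_{K_n}})\le\prob\Bigl(|\Ocal_p(u_n^{(\bar K)})|\ge \zeta_n/(K_n+1)\Bigr).
\]
With $\zeta_n=\kappa_n\bar K^3(\log\bar K)^2(\log\beta_n)^2/\beta_n$ we compare this threshold with $\Cd u_n^{(\bar K)}=8\Cd^2\bar K\log(\bar KK_n\delta_n^2)\log\delta_n$, choosing $\kappa_n$ large relative to $\Cd$. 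Summing over at most $\bar K$ values of $L$ gives probability at most $\bar K\cdot\epsilon/K_n$ from the definition of $\Ocal_p(\cdot,K_n,\cdot)$. Since $K_n\le\bar K$ and $\bar K$ is fixed, this is $O(\epsilon)$, and letting $\epsilon\downarrow0$ finishes the proof.

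\textbf{Main obstacle.} The hard step is the scaling comparison in Step 2. As written, $\zeta_n$ carries a factor $1/\beta_n$, while $u_n^{(\bar K)}$ grows polylogarithmically in $\delta_n$. So the penalty exceeds the fluctuation only if the normalisations of $R_n$ and $u_n$ are reconciled. I would first reinterpret the fluctuation bound on the same per-segment normalised scale used in Lemma \ref{lemma:BIC-minimalrecovery}, where the signal was $\delta_n\eta_{\min}$. Failing that, I would use the freedom that $\zeta_n\to\infty$ faster than $u_n^{(\bar K)}(K_n+1)$, as in the first part of Theorem \ref{thm:marginal}. If neither reconciles the scales, I would fix $\kappa_n$ so that $\zeta_n\ge(K_n+1)\Cd u_n^{(\bar K)}$ for large $n$, and I would need to check this remains compatible with the condition in Lemma \ref{lemma:BIC-minimalrecovery}, where $\zeta_n$ must not swamp the $\delta_n\eta_{\min}$ signal.
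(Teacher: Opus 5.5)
Your Step~1 follows the paper's Case~II argument: insert the true change points via Lemma~\ref{lemma:monotone3}, control each true segment with Lemma~\ref{lemma:monotone2}, and take a union bound over $K_n<L\le\bar K$, ending with the same $\bar K\epsilon/K_n$. You are right that this chain never uses the closeness condition defining $\Ccal(L,\delta_n)$. It therefore covers every configuration, and the paper's Case~I is superfluous.

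The proposal is still incomplete at the decisive step. Step~2 needs $\zeta_n\ge (K_n+1)\Cd u_n^{(\bar K)}$, and you leave this open. The paper passes the same point at its step (i), where it replaces $\prob(\Ocal_p(u_n^{(K_n)},K_n,\Cd)>\zeta_n/(4K_n))$ by $\prob(\Ocal_p(u_n^{(K_n)},K_n,\Cd)>u_n^{(K_n)})$; that is, it silently assumes $\zeta_n\ge 4K_nu_n^{(K_n)}$.

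Your diagnosis is correct. With $\bar K$ fixed and $\beta_n\to\infty$, the stated $\zeta_n=\kappa_n\bar K^3(\log\bar K)^2(\log\beta_n)^2/\beta_n$ tends to $0$, while $u_n^{(K_n)}=8\Cd K_n\log(K_n^2\delta_n^2)\log\delta_n$ does not, so no choice of the constant $\kappa_n$ rescues the comparison. The conclusion is in fact false for that penalty when $\bar K>K_n$. The best additional split of a true segment lowers $R_n$ by an order-one quantity of two-sample Cram\'er--von Mises type, and a vanishing penalty cannot offset it.

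Of your three remedies, only the third works:
\begin{itemize}
\item The first fails because there is no hidden normalisation. $R_n$, $\xi_m$ and the signal $\delta_n\eta_{\min}$ are all on the same count scale, with weights $\tau'_{i+1}-\tau'_i$ and $n_{kl}$.
\item The second fails because a penalty with $\zeta_n\to\infty$ at an arbitrary rate need not dominate $K_nu_n^{(\bar K)}$.
\item The third closes the argument, but it amounts to replacing the penalty by one growing at least like $\bar K^2\log(\bar K\delta_n)\log\delta_n$.
\end{itemize}
Because $\Cd$ depends on $\epsilon$, letting $\epsilon\downarrow0$ also requires either a diverging ratio $\zeta_n/(\bar K^2\log(\bar K\delta_n)\log\delta_n)$ or a return to the explicit tail bound in the proof of Lemma~\ref{lemma:KLD-bound}.

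Two further points bear on the scaling you flag. First, the uniformity over configurations in your Step~1 rests on the supremum in Lemma~\ref{lemma:KLD-bound} over a whole segment $\tau_{m-1}\le k<l\le\tau_m$. Its proof, however, only union-bounds over the at most $\delta_n^2$ pairs in a window of length $\delta_n$. Superfluous points can sit anywhere in a segment of length at least $\beta_n$. The honest union bound then gives $\log(K_n\beta_n^2)\log\beta_n$ in place of $\log(K_n\delta_n^2)\log\delta_n$. This is the same $(\log\beta_n)^2$ that appears in the stated penalty, which suggests the intended penalty is of order $\kappa_n\bar K^3(\log\bar K)^2(\log\beta_n)^2$ without the division by $\beta_n$.

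Second, your compatibility worry is well founded. When $\mathrm{BIC_L}=\min R_n+L\zeta_n$ is minimised, the penalty difference $(K_n-L)\zeta_n$ in Lemma~\ref{lemma:BIC-minimalrecovery} works \emph{against} the $\delta_n\eta_{\min}$ signal. The paper's Step~II counts it in favour only because of its sign convention $\mathrm{BIC_{K_n}}=-R_n(\tau_1,\dots,\tau_{K_n})+K_n\zeta_n$. One therefore also needs roughly $\zeta_n=o(\delta_n\eta_{\min})$.

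Your Steps~1--2 do yield a proof once $\zeta_n$ lies in the window between the segment-wide fluctuation scale and $\delta_n\eta_{\min}$. They do not yield one with the penalty as stated in Theorem~\ref{thm:marginal}.
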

    \subsection{Proof of Lemma \ref{lemma:BIC-exactrecovery}}\label{sec:prf-BICexact}
    \begin{proof}
        To avoid trivialities, assume that $K_n>0$ and consider two cases:
        
        \emph{Case I $((\hat \tau_1,\dots,\hat \tau_L)\in \bigcup_{r=1}^{K_n}\Bcal_r(L,\delta_n))$:} When $(\hat \tau_1,\dots,\hat \tau_L)\in \Bcal_r(L,\delta_n)$, for some $r\in\{1,\dots,K_n\}$ (with $\Bcal_r(L,\delta_n)$ defined as in \cref{eq:estimation_ball}),  it follows similarly to the proof of Lemma \ref{lemma:BIC-minimalrecovery} that 
        \begin{align*}
            \prob\lp \bigcup_{L=1}^{K_n-1}\lc \mathrm{BIC_L}>\mathrm{BIC_{K_n}}\rc \rp & \leq K_n\prob\lp \Ocal_p(u_n^{(K_n)},K_n,\Cd)>u_n\pow L\rp\\
            & < \epsilon.
        \end{align*}

        \emph{Case II $((\hat \tau_1,\dots,\hat \tau_L)\in  \Ccal(L,\delta_n))$:} To formalize the second case, we introduce the following notation
        \begin{align*}
             \Ccal(L,\delta_n):=\{(\tau_1',\dots,\tau_{L}'):1<\tau_1'<\dots<\tau_{L}'\leq n, \text{and}\ \exists\  \text{i such that} \ |\tau_i'-\tau_s|\leq \delta_n \ \forall \  1\leq s\leq L \}
        \end{align*}
         We will show that
        \begin{align*}
            R_n(\tau_1',\dots,\tau_L')\geq R_n(\tau_1,\dots,\tau_{K_n})+4L\Ocal_p(u_n\pow{K_n},K_n,\Cd).\numberthis\label{eq:loss-gain}
        \end{align*}
        

        Following \cref{triangle_Rn},
        \begin{align*}
          R_n(\tau_1',\dots,\tau_L') & \geq R_n(\tau_1',\dots,\tau_L',\tau_1,\dots,\tau_{r-1},\tau_r,\tau_{r+1},\dots,\tau_{K_n})\\
            & = T_0+T_1+\dots+T_{K_n}\\
            & \geq \sum_{i=0}^{K_n} S_n(\tau_i,\tau_{i+1}) + K_n \Ocal_p(u_n\pow{K_n},K_n,\Cd)\\
            & = R_n(\tau_1,\dots,\tau_n) + K_n \Ocal_p(u_n\pow{K_n},K_n,\Cd).
        \end{align*}
        Therefore,
        \begin{align*}
             R_n(\tau_1',\dots,\tau_L') & \geq R_n(\tau_1,\dots,\tau_{K_n})+K_n\Ocal_p(u_n,K_n,\Cd).
        \end{align*}
        
        As before, let $\operatorname{BIC_{K_n}}=\operatorname{BIC}_\star=- R_n(\tau_1,\dots,\tau_{K_n})+K_n\zeta_n$. Then, for any given $L$,
        \begin{align*}
            \mathrm{BIC_L}-\mathrm{BIC_{K_n}}\leq -K_n\Ocal_p(u_n,K_n,\Cd)-(K_n-L)\zeta_n
        \end{align*}
        Therefore, 
        \begin{align*}
            \prob\lp \hat K_n > K_n \rp & = \prob\lp \bigcup_{L=K_n+1}^{\bar K} \lc \operatorname{BIC_L}>\operatorname{BIC_{K_n}} \rc \rp\\
            & = \sum_{L=K_n+1}^{\bar K}\prob\lp \operatorname{BIC_L}>\operatorname{BIC_{K_n}} \rp\\
            & \leq \sum_{L=K_n+1}^{\bar K}\prob\lp 4K_n\Ocal_p(u_n^{(K_n)},K_n,\Cd) > (L-K_n)\zeta_n  \rp\\
            & \leq \sum_{L=K_n+1}^{\bar K}\prob\lp \Ocal_p(u_n^{(K_n)},K_n,\Cd) > \lp \frac{L}{K_n}-1\rp\frac{\zeta_n}{4}  \rp\\
            & \leq \bar K \prob\lp \Ocal_p(u_n^{(K_n)},K_n,\Cd) > \lp \frac{1}{K_n}\rp\frac{\zeta_n}{4}  \rp\\
            & \overset{(i)}{\leq} \bar K \prob\lp \Ocal_p(u_n^{(K_n)},K_n,\Cd) > u_n\pow{K_n} \rp\\
            & = \frac{\bar K}{K_n} K_n\prob\lp \Ocal_p(u_n^{(K_n)},K_n,\Cd) > u_n\pow{K_n} \rp\\
            & \overset{(i)}{<} \frac{\bar K}{K_n} \epsilon
        \end{align*}
        where $(i)$ follows from \cref{eq:OcalN}. Recall that $\bar K$ is bounded and we have assumed $K_n$ to be positive. Since $\epsilon$ is arbitrary, the proof follows.
    \end{proof}
    
\begin{proof}[Proof of Theorem \ref{thm:marginal}]
         Now we can finally prove our main theorem. It follows by combining Lemma \ref{lemma:BIC-minimalrecovery} and \ref{lemma:BIC-exactrecovery} that $\prob\lp\hat K_n=K_n\rp\rightarrow1$.
\end{proof}

    \subsection{Proof of Lemma \ref{lemma:KLD-bound}}\label{sec:prf_KLDbd}
    \begin{proof}
        We first show that under the hypothesis of the Lemma \ref{lemma:KLD-bound},
        \begin{align*}
             \lim_{n\rightarrow\infty} K_n\prob \lp\sup_{\tau_{m-1}\leq k<l\leq\tau_{m-1}+\delta_n}  \xi_m(k,l)\geq u_n \rp < \epsilon.\numberthis\label{eq:KLD-boundsmall}
        \end{align*}
        
        As before, let $n_{kl}=l-k$, $\|\hat F_k^l-F \|_0^1:=\sup_{u\in[0,1]}|\hat F_k^l(u)-F(u)|$ and $E_n$ to be the following event
        \begin{align*}
            E_n:= \bigcup_{k,l}\lc \sqrt{n_{kl}}\|\hat F_k^l-F\|_0^1>\sqrt{{\Cd}{\log(K_n\delta_n^2)\max\lc \log 2,\log n_{kl}\rc}} \rc.
        \end{align*}

        Then using union bound, 
        \begin{align*}
            \prob\lp E_n \rp & \leq \sum_{k,l}\prob \lp \sqrt{n_{kl}} \|\hat F_k^l-F\|_0^1>\sqrt{{\Cd}{\log(K_n\delta_n^2)\max\lc \log 2,\log n_{kl}\rc}}  \rp\\
            & \overset{(i)}{\leq} \sum_{k,l} \kappa^\star(\rho,
        \lambda)\exp\lp -\kappa(\rho,\lambda) \Cd\frac{n_{kl}}{\log n_{kl}} \frac{\log(K_n\delta_n^2)\max\lc \log 2,\log n_{kl}\rc}{n_{kl}} \rp\\
        &\overset{(ii)}{\leq} \delta_n^2 \kappa^\star(\rho,
        \lambda)\exp\lp -\kappa(\rho,\lambda)\Cd  \frac{\log(K_n\delta_n^2)\log n_{kl}}{\log n_{kl}} \rp\\
        &\overset{(iii)}{\leq }\delta_n^2 \kappa^\star(\rho,
        \lambda)\exp\lp -\kappa(\rho,\lambda) \Cd {\log(K_n\delta_n^2)} \rp\\
        & \overset{(iv)}{\leq} \frac{\epsilon\delta_n^2}{K_n\delta_n^2}\\
        & = \frac{\epsilon}{K_n}\numberthis\label{eq:KLD-bounddone}
        \end{align*}
$(i)$ follows from Theorem \ref{prop:talagrand-scaled}; 
$(ii)$ follows because to avoid trivialities we can assume $n_{kl}\geq2$ 
and because $|\{\tau_{m-1}\leq k,l\leq \tau_{m-1}+\delta_n\}|\leq\delta_n^2 $; 
{$(iii)$ follows simplifying the fraction;}
{$(iv)$ }follows from the definition of $\Cd$ in the statement of Lemma \ref{lemma:KLD-bound}.

Now, observe that {for any $m$, $k$ and $l$,}
\begin{align*}
    \xi_m(k,l)= {n_{kl}}\int_{X_{(1)}}^{X_{(n)}}\lv\hat F_k^l(u)-F(u)\rv^2 d\hat F_n(u)\leq {n_{kl}}(\|\hat F_k^l-F\|_0^1)^2.
\end{align*}
    Since $\log \delta_n\geq \log n_{kl}$, $E_n\subseteq\{\sup_{\tau_{m-1}\leq k<l\leq\tau_{m-1}+\delta_n}  \xi_m(k,l)\geq u_n\}$ and it follows from \cref{eq:KLD-bounddone}, that 
\begin{align*}
    K_n\prob\lp \sup_{\tau_{m-1}\leq k<l\leq\tau_{m-1}+\delta_n}  \xi_m(k,l)\geq u_n \rp\leq \epsilon.
\end{align*}
The conclusion of the lemma now follows. This completes the proof.
\end{proof}
\subsection{Proof of Lemma \ref{lemma:monotne}}\label{sec:prf-monotne}
    \begin{proof}
        For convenience of notation, let $n_1^\star=\tau_1'-\tau_s$, and so on until we have $n_L^\star = \tau_L'-\tau_{L-1}'$, and $n_{L+1}^\star = \tau_{s+1}-\tau_L'$. Let $n^\star = \sum_i n_i^\star$. Then, overloading the notation of $\tau_s$ as $\tau_0'$ and $\tau_{s+1}$ as $\tau_{L+1}'$, we observe that 
        \begin{small}
             \begin{align*}
            & R_n(\tau_s,\tau_1',\dots,\tau_L',\tau_{s+1})-R_n(\tau_s,\tau_{s+1})\\ & =\int_{X_{(1)}}^{X_{(n)}}\lb \sum_{i=1}^{L+1} n_i^\star \hat F_{\tau_i'}^{\tau_{i+1}'}(u)\lp 1-\hat F_{\tau_i'}^{\tau_{i+1}'}(u)\rp -n^\star\hat F_{\tau_s}^{\tau_{s+1}}(u)\lp 1-\hat F_{\tau_s}^{\tau_{s+1}}(u)\rp \rb d\hat F_n(u)\\
            \end{align*}
            Looking at the integrand (and dropping $u$ for convenience) we get
            \begin{align*}
            \sum_{i=1}^{L+1} n_i^\star \hat F_{\tau_i'}^{\tau_{i+1}'}\lp 1-\hat F_{\tau_i'}^{\tau_{i+1}'}\rp -n^\star\hat F_{\tau_s}^{\tau_{s+1}}\lp 1-\hat F_{\tau_s}^{\tau_{s+1}}\rp & \overset{(i)}{=} - \sum_{i=1}^{L+1} n_i^\star (\hat F_{\tau_i'}^{\tau_{i+1}'})^2 + n^\star(\hat F_{\tau_s}^{\tau_{s+1}})^2\numberthis\label{eq:newmonotone-eq1}\\
            & \overset{(ii)}{\leq} 0,
        \end{align*}
        \end{small}
       
        where $(i)$ follows since $\sum_{i=1}^{L+1} n_i^\star \hat F_{\tau_i'}^{\tau_{i+1}'}=n^\star F_{\tau_s}^{\tau_{s+1}}$ and $(ii)$ follows since $-x^2$ is a concave function. We turn to proving the order term.

        We have already pointed out the fact that $\sum_{i=1}^{L+1} n_i^\star \hat F_{\tau_i'}^{\tau_{i+1}'}=n^\star F_{\tau_s}^{\tau_{s+1}}$ and recall that $\sum_{i=1}^{L+1} n_i^\star = n^\star$ by definition. Now consider the right hand side of \cref{eq:newmonotone-eq1} and add and subtract $n^\star (F_s)^2-2n^\star \hat F_{\tau_s}^{\tau_{s+1}}F_s$ to get,
        \begin{align*}
            - \sum_{i=1}^{L+1} n_i^\star (\hat F_{\tau_i'}^{\tau_{i+1}'})^2 + n^\star(\hat F_{\tau_s}^{\tau_{s+1}})^2 & =  - \sum_{i=1}^{L+1} n_i^\star\lb (\hat F_{\tau_i'}^{\tau_{i+1}'})^2+ (F_s)^2-2n^\star \hat F_{\tau_i'}^{\tau_{i+1}'}F_s \rb\\
            & \qquad \qquad + n^\star\lb (\hat F_{\tau_s}^{\tau_{s+1}})^2 +(F_s)^2-2 \hat F_{\tau_s}^{\tau_{s+1}}F_s\rb \\
            & =  - \sum_{i=1}^{L+1} n_i^\star (\hat F_{\tau_i'}^{\tau_{i+1}'}-F_s)^2+ \underbrace{n^\star (\hat F_{\tau_s}^{\tau_{s+1}}-F_s)^2}_{\geq 0}\\
            & \geq  - \sum_{i=1}^{L+1} n_i^\star (\hat F_{\tau_i'}^{\tau_{i+1}'}-F_s)^2.
        \end{align*}
        Therefore, 
        \begin{align*}
            R_n(\tau_s,\tau_1',\dots,\tau_L',\tau_{s+1})-R_n(\tau_s,\tau_{s+1}) & \geq \int_{X_{(i)}}^{X_{(n)}} - \sum_{i=1}^{L+1} n_i^\star (\hat F_{\tau_i'}^{\tau_{i+1}'}(u)-F_s(u))^2d\hat F_n(u)\\
            & \geq -\sum_{i=1}^{L+1}\xi_s(\tau_i',\tau_{i+1}').
        \end{align*}
         For any collection of positive random variables $Z_1,\dots,Z_{L+3}$, recall the probabilistic bound
        \begin{equation}\label{sequence_bound}
            \prob\lp \sum_{i=1}^{L+1} Z_i>a \rp\leq \sum_{i=1}^{L+1}\prob\lp Z_i>\frac{a}{L+1} \rp.
        \end{equation}
        Therefore, using Lemma \ref{lemma:KLD-bound}
    \begin{align*}
        & \lim_{n\to\infty} K_n \prob\lp 
          \lv R_n(\tau_s,\tau'_1,\ldots,\tau'_L,\tau_{s+1})
          - R_n(\tau_s,\tau_{s+1})\rv
        > 8\Cd L{{\log(LK_n \delta_n^2)\log \delta_n}} \rp\\
        &\qquad \le \lim_{n\to\infty} K_n \prob\lp
           \sum_{i=1}^{L+1}\xi_s(\tau_i',\tau_{i+1}')
           > 8\Cd L{\log(K_n L\delta_n^2)\log \delta_n}
        \rp \\
        &\qquad \le \lim_{n\to\infty} K_n \prob\lp
           \sum_{i=1}^{L+1}\xi_s(\tau_i',\tau_{i+1}')
           > 4\Cd (L+1){\log(LK_n \delta_n^2)\log \delta_n}
        \rp \\
        &\qquad \le L^{-1} \sum_{i=1}^{L+1} \lim_{n\to\infty} L K_n \prob\lp \xi_s(\tau_i',\tau_{i+1}')
           >4\Cd {\log(LK_n \delta_n^2)\log \delta_n}
        \rp \\
        &\qquad < L^{-1}(L+1)\,\varepsilon \,.
        \end{align*}
        Since $n>\delta_n$, the statement of Lemma \ref{lemma:monotne} is now established.
    \end{proof}

\subsection{Proof of Lemma \ref{lemma:monotone3}}\label{sec:prf-mntne3}   
\begin{proof}
Fix one “coarse’’ interval $J=(\tilde\tau_j,\tilde\tau_{j+1})$ induced by
$\{\tilde\tau_1,\ldots,\tilde\tau_s\}$. The extra points
$\{\tau_1',\ldots,\tau_p'\}$ partition $J$ into subintervals
$J_1,\ldots,J_{m_j}$ with counts $n_{j,k}^\star$ (so that
$\sum_{k=1}^{m_j}n_{j,k}^\star=n_J^\star$) and
$p_{j,k}(u):=\hat F_{J_k}(u)\in[0,1]$. Set weights
$w_{j,k}:=n_{j,k}^\star/n_J^\star$ and the weighted average
$\bar p_j(u):=\sum_{k=1}^{m_j} w_{j,k} p_{j,k}(u)=\hat F_J(u)$.

Let $f(x)=x(1-x)=x-x^2$, a concave function on $[0,1]$. Then, pointwise in $u$,
\begin{align*}
\sum_{k=1}^{m_j} n_{j,k}^\star f \bigl(p_{j,k}(u)\bigr)
- n_J^\star f \bigl(\bar p_j(u)\bigr)
&= -\,n_J^\star \left(
\sum_{k=1}^{m_j} w_{j,k}\,p_{j,k}(u)^2
- \Bigl(\sum_{k=1}^{m_j} w_{j,k}\,p_{j,k}(u)\Bigr)^2
\right)\\
&= -\,n_J^\star\,\Var_{w_j} \bigl(p_{j,k}(u)\bigr) \le 0.
\end{align*}
Equivalently,
\[
\sum_{k=1}^{m_j} n_{j,k}^\star \hat F_{J_k}(u)\bigl(1-\hat F_{J_k}(u)\bigr)
 \le  n_J^\star \hat F_J(u)\bigl(1-\hat F_J(u)\bigr).
\]
Summing this inequality over all coarse intervals
$J\in\mathcal J(\{\tilde\tau_1,\ldots,\tilde\tau_s\})$ and integrating
with respect to $d\hat F_n(u)$ yields
\[
R_n(\tilde\tau_1,\ldots,\tilde\tau_s,\tau_1',\ldots,\tau_p')
 \le  R_n(\tilde\tau_1,\ldots,\tilde\tau_s).
\]
Equality holds iff $\Var_{w_j}(p_{j,k}(u))=0$ for all $j$ and a.e.\ $u$,
i.e., when each coarse interval has identical within-subinterval cdfs.
\end{proof}

\subsection{Proof of Theorem \ref{prop:talagrand-scaled}}\label{sec:prf-tal-scaled}

    Now we proceed to proving Theorem \ref{prop:talagrand-scaled}. Since there is always a rational number between any two real numbers, it holds almost everywhere that
    \begin{align*}
        \sup_{f\in\Fcal_{[0,1]}}\lv\sum_{i=1}^n\lp f(Y_i) - \expec_{\pi}\lb f(Y)\rb\rp \rv\leq \sup_{f\in\Fcal_{[0,1]}\bigcap \Qbb}\lv \sum_{i=1}^n\lp f(Y_i) - \expec_{\pi}\lb f(Y)\rb\rp \rv+2
    \end{align*}
    Therefore,
    \begin{align*}
        \prob(nZ>nt+\kappa \Rcal(\Fcal_{[0,1]\bigcap \Qbb})))\leq \prob\lp \underbrace{\sup_{f\in\Fcal_{[0,1]}\bigcap \Qbb}\lv\sum_{i=1}^n\lp f(Y_i) - \expec_{\pi}\lb f(Y)\rb\rp \rv>nt+\kappa \Rcal(\Fcal_{[0,1]\bigcap \Qbb}))-2}_{=:\Tcal}\rp
    \end{align*} 
    
    $t\geq 3/n$ by hypothesis. Therefore $nt-2\geq 1$. We now state the following Lemma which is proved in \S \ref{sec:prf-talagrand}.
    \begin{lemma}\label{lemma:talagrand}
    Let $Y_1,\dots,Y_n$ be a sequence of random variables from a Markov chain with stationary distribution $\pi$, and let $Y$ be a random variable with distribution $\pi$. Define  
    \begin{align*}
        Z':= \sup_{f\in\Fcal_{[0,1]\bigcap \Qbb}}\lv\sum_{i=1}^n\lp f(Y_i) - \expec_{\pi}\lb f(Y)\rb\rp \rv.
    \end{align*}
    Then, for some universal constant $\kappa>4e$, any $\kappa_\rho>\sqrt{\expec_A[\rho_A^2(2)]}$, $\kappa_\lambda=2\expec_A[\exp(\rho_A(2)\lambda)]/\lambda$, and $\rho_o$ as defined in \S \ref{sec:nummelin-splitting},
   \begin{align*}
    \prob\lp Z'> \rdot&\ldot t+\kappa \Rcal(\Fcal_{[0,1]\bigcap \Qbb}) \rp \leq \exp\lp -\frac{\expec_A[\rho_A(2)]}{\kappa}\min\lc \frac{t^2}{n \expec_A[\rho_A^2(2)]},\frac{t}{\rho_o^3\log n } \rc \rp \text{ and }\\
       \Rcal(\Fcal_{[0,1]\bigcap \Qbb})) & = 2(\expec_A[\rho_A(2)]+\expec_\nu[\rho_A(2)])+\kappa\lb \kappa_\rho\log \lp \frac{\kappa_\rho}{\sqrt{\expec_A[\rho_A^2(2)]}}\rp+\sqrt{n\expec_A[\rho_A^2(2)]\log\lp\frac{\kappa_\rho}{\sqrt{\expec_A[\rho_A^2(2)]}}\rp} \rb\\
        & \qquad +n\exp\lp -\kappa_\rho\lambda/2 \rp\kappa_\lambda.
    \end{align*} 
    \whiteqed
\end{lemma}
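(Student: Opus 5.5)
We will prove Lemma \ref{lemma:talagrand} by reducing the dependent sum to sums over i.i.d. regeneration blocks and then applying a Talagrand-type inequality for independent but unbounded summands, in the spirit of \cite{adamczak_tail_2008}. Using the split chain of \S\ref{sec:nummelin-splitting}, write, for each $f\in\Fcal_{[0,1]\cap\Qbb}$ and $\bar f:=f-\expec_\pi f(Y)$,
\[
\sum_{i=1}^n \bar f(Y_i)=\sum_{i=1}^{\rho_A(1)}\bar f(Y_i)+\sum_{j=1}^{N_n}\Phi_j(f)+\sum_{i=\rho_A(N_n+1)+1}^{n}\bar f(Y_i),
\qquad \Phi_j(f):=\sum_{i\in B_j}\bar f(Y_i).
\]
Here $N_n$ is the number of complete blocks. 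Since $|\bar f|\le 1$, the first and last pieces are bounded by $\rho_A(1)$ and by the length of the incomplete final block. Their expectations give the term $2(\expec_A[\rho_A(2)]+\expec_\nu[\rho_A(2)])$ in $\Rcal$. Their exponential tails, controlled through $\rho_o$, give the $t/(\rho_o\log n)$-type branch of the minimum after a union bound.

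\textbf{Step 1 (block sums).} The $\Phi_j(f)$ are i.i.d. Because $\pi$ is the block-occupation measure normalised by $\expec_A[\rho_A(2)]$ (Kac's formula), they are centred. They satisfy $\sup_f|\Phi_j(f)|\le \rho_A^{(j)}(2)$, so their envelope has second moment $\expec_A[\rho_A^2(2)]$ and $\psi_1$-norm at most $\rho_o$.

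To remove the random index $N_n$, we bound the sum by a maximum over $m\le n$ of partial sums $\sup_f|\sum_{j\le m}\Phi_j(f)|$. Montgomery-Smith / Lévy–Ottaviani type maximal inequalities reduce this to a fixed $m=n$, at the cost of a constant.

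\textbf{Step 2 (truncation).} Truncate each block at level $\kappa_\rho$: $\Phi_j=\Phi_j\indicator[\rho^{(j)}\le\kappa_\rho]+\Phi_j\indicator[\rho^{(j)}>\kappa_\rho]$. The unbounded part has expected supremum at most $n\,\expec[\rho\indicator[\rho>\kappa_\rho]]\le n e^{-\kappa_\rho\lambda/2}\kappa_\lambda$, by Assumption \ref{assume:regenerate} and the inequality $x\le (2/\lambda)e^{\lambda x/2}$. This yields the last term of $\Rcal$.

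The bounded part is a class indexed by half-lines, hence VC of dimension $1$. Dudley's entropy integral, with envelope $\kappa_\rho$ and variance proxy $\expec_A[\rho_A^2(2)]$, gives the expected supremum bound $\kappa\bigl[\kappa_\rho\log(\kappa_\rho/\sqrt{\expec_A\rho^2})+\sqrt{n\expec_A[\rho^2]\log(\kappa_\rho/\sqrt{\expec_A\rho^2})}\bigr]$. This is the standard Giné–Koltchinskii bound for VC classes with small variance. Summing the pieces gives $\expec Z'\le\Rcal(\Fcal_{[0,1]\cap\Qbb})$.

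\textbf{Step 3 (deviation).} Apply Adamczak's version of Talagrand's inequality for i.i.d. unbounded processes. It states that
\[
\prob\bigl(\sup_f|\textstyle\sum_j\Phi_j|>(1+\eta)\expec\sup_f|\textstyle\sum_j\Phi_j| + t\bigr)\le \exp\Bigl(-c\min\Bigl\{\frac{t^2}{n\sigma^2},\frac{t}{\|\max_j\rho^{(j)}\|_{\psi_1}}\Bigr\}\Bigr).
\]
We bound $\|\max_{j\le n}\rho^{(j)}\|_{\psi_1}\lesssim\rho_o\log n$ by Pisier's inequality, which is where the $\log n$ enters. The boundary pieces are handled similarly, which accounts for the cruder power of $\rho_o$. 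Rescaling by the expected number of blocks $n/\expec_A[\rho_A(2)]$ produces the prefactor $\expec_A[\rho_A(2)]/\kappa$.

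\textbf{Main obstacle.} The main obstacle is Step 3's treatment of the random number of blocks together with the boundary blocks. Dependence between $N_n$ and the block lengths prevents a direct application of the i.i.d. inequality. I would first try the maximal-inequality reduction over deterministic $m\le n$, combined with a union bound over $m$ absorbed into the $\log n$. If the constants fail to close, I would instead condition on the event $N_n\le 2n/\expec_A\rho$, whose complement has exponentially small probability by Bernstein's inequality for the i.i.d. $\psi_1$ block lengths.
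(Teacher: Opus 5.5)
Your plan follows the paper's route, with the cited results unpacked. The paper's proof is a chain of citations:
\begin{itemize}
\item Bertail--Portier's Rademacher bound for VC classes (their Theorem 4(ii)). Its form is exactly what truncation of block lengths at $\kappa_\rho$ plus a small-variance VC entropy bound produces. The paper specialises it through the half-line VC lemma and $(\sigma')^2=\expec_A[\rho_A^2(2)]$.
\item Their Theorem 5, for the boundary terms.
\item Adamczak's tail inequality for Markov chains (his Theorem 7), for the deviation.
\end{itemize}
Your Steps 1--3 rebuild these from the block decomposition, Kac's formula, truncation, the Gin\'e--Koltchinskii bound, and Adamczak's \emph{i.i.d.} inequality with Pisier's $\log n$. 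That is legitimate and more self-contained. The cost is that you must handle the random number of blocks $N_n$ yourself, and there your first plan does not give the stated bound.

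\textbf{Reduction to $m=n$.} If you reduce to the fixed index $m=n$, the variance proxy is $n\,\expec_A[\rho_A^2(2)]$ rather than $(n/\expec_A[\rho_A(2)])\,\expec_A[\rho_A^2(2)]$. The sub-Gaussian branch then comes out as $t^2/(n\expec_A[\rho_A^2(2)])$, without the factor $\expec_A[\rho_A(2)]$. That factor cannot be hidden in a universal $\kappa$, and its absence contradicts your own ``rescaling by the expected number of blocks'' in Step 3.

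\textbf{Union bound over $m\le n$.} This is worse: the resulting factor $n$ in front of the exponential cannot be absorbed for those $t$ at which the exponent is $O(\log n)$.

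\textbf{Your fallback works.} Take $M=\lceil 2n/\expec_A[\rho_A(2)]\rceil$.
\begin{itemize}
\item On $\{N_n\le M\}$, bound $\sup_f|\sum_{j\le N_n}\Phi_j(f)|$ by the maximum over $m\le M$.
\item Apply Montgomery-Smith's inequality at the deterministic index $M$, then Adamczak's inequality to $M$ i.i.d.\ blocks, with variance proxy $M\expec_A[\rho_A^2(2)]$.
\item Bound $\prob(N_n>M)\le 2\exp\bigl(-c\,n\min\{\expec_A[\rho_A(2)]/\rho_o^2,\,1/\rho_o\}\bigr)$ by Bernstein's inequality for the $\psi_1$ block lengths.
\end{itemize}
The last term is absorbed into the claimed bound for large $\kappa$. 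This is because $Z'\le n$, so only $t\le n$ matters, and because $\expec_A[\rho_A(2)]\lesssim\rho_o$.

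\textbf{Definition of $\rho_o$.} Your Step 1 needs the envelope and the first block to have $\psi_1$-norm at most $\rho_o$. Like the Adamczak theorem the paper cites, this requires $\rho_o$ to dominate both $\|\rho_A(1)\|_{\psi_1}$ and $\|\rho_A(2)\|_{\psi_1}$. So $\rho_o$ must be the maximum of the two norms, not the minimum used in \eqref{def_rho_o}.
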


It now follows using Lemma \ref{lemma:talagrand} that the right hand side of the previous equation is bounded above by
    \begin{align*}
        \exp\lp -\frac{\expec_A[\rho_A(2)]}{\kappa}\min\lc \frac{(nt-2)^2}{n \expec_A[\rho^2_A(2)]},\frac{(nt-2)}{\rho_o^3\log n } \rc \rp
    \end{align*}
    By setting $\kappa_\rho=(2/\lambda)\log(n/2\kappa_\lambda)$ and observing that under Assumption \ref{assume:regenerate} (EM), $1/(2\kappa_\lambda)< 1$ we get
    \begin{small}
    \begin{align*}
        \Rcal(\Fcal_{[0,1]\bigcap \Qbb}))\leq 2(\expec_A[\rho_A(2)]+\expec_\nu[\rho_A(2)])+\kappa\lb 2\frac{\log(n)}{\lambda}\log \lp \frac{2\log(n)/\lambda}{\sqrt{\expec_A[\rho^2_A(2)]}}\rp+\sqrt{n\expec_A[\rho^2_A(2)]\log\lp\frac{2\log(n)/\lambda}{\sqrt{\expec_A[\rho^2_A(2)]}}\rp} \rb
    \end{align*}
    \end{small}

    Observe that $\sqrt{\expec_A[\rho^2_A(2)]}\geq 1$. Now, with a constant $\kappa(\tau,\lambda)$ depending on $\lambda$ and $\expec_A[\rho^2_A(2)]$, and $\expec_v[\rho_A(2)]$ we have with some standard manipulations 
    \begin{align*}
        \Rcal(\Fcal_{[0,1]\bigcap \Qbb}))\leq \kappa(\rho,\lambda) \sqrt{n\log n} 
    \end{align*}
    Then
    we have 
    \begin{align*}
         \prob(nZ>nt+ \kappa(\rho,\lambda) \sqrt{n\log n}))\leq \exp\lp -\frac{\expec_A[\rho_A(2)]}{\kappa}\min\lc \frac{(nt-2)^2}{n \expec_A[\rho^2_A(2)]},\frac{(nt-2)}{\rho_o^3\log n } \rc \rp.
    \end{align*}
    
    Now dividing both sides of $\Tcal$ by $n$ and trivially upper bounding $2$ by $2\kappa$, we have for some universal constant $\kappa>0$, and for all $t>3/n$
    \begin{align*}
    \prob\lp Z>t+ \kappa(\rho,\lambda) \sqrt{n\log n}/n \rp\leq\exp\lp -\frac{\expec_A[\rho_A(2)]}{\kappa}\min\lc \frac{(nt-2)^2}{n \expec_A[\rho^2_A(2)]},\frac{(nt-2)}{\rho_o^3\log n } \rc \rp\numberthis\label{eq:talagrandscaled-conc}
    \end{align*}
    where, for some constant $\kappa(\tau,\lambda)$ depending only on $\expec_A[\rho^2_A(2)]$, $\expec_v[\rho_A(2)]$, $\lambda$. 
    Next, observe that 
\begin{align*}
    \prob\lp Z>t \rp & = \prob\lp Z-\expec Z > t - \expec Z \rp
\end{align*}
Since $\expec Z<\kappa(\rho,\lambda) \sqrt{n\log n}/n=\Ocal(\sqrt{\log n/n})$, there exists a constant $\kappa'(\rho,\lambda)>3/n$ such that 
\begin{align*}
    t-\expec Z >t -\kappa'(\rho,\lambda).
\end{align*}
Then, 
\begin{align*}
    \prob(Z>t)\leq\exp\lp -\frac{\expec_A[\rho_A(2)]}{\kappa}\min\lc \frac{(n(t-\kappa'(\rho,\lambda))-2)^2}{n \expec_A[\rho^2_A(2)]},\frac{(n(t-\kappa'(\rho,\lambda))-2)}{\rho_o^3\log n } \rc \rp.
\end{align*}
We now make 2 cases.

\textbf{Case I:} When $t>2\kappa'(\rho,\lambda)$, we also have $t-\kappa'(\rho,\lambda)-2/n>t/2$, and hence
\begin{align*}
    \prob(Z>t)\leq \exp\lp -\frac{\expec_A[\rho_A(2)]}{\kappa}\min\lc \frac{(nt/2)^2}{n \expec_A[\rho^2_A(2)]},\frac{nt/2}{\rho_o^3\log n } \rc \rp.
\end{align*}

\textbf{Case II:} When $0<t\leq 2\kappa'(\rho,\lambda)$, there exists a large enough constant $\kappa^\star(\rho,\lambda)$ such that
\begin{align*}
    \prob(Z>t)\leq \kappa^\star(\rho,\lambda)\exp\lp -\frac{\expec_A[\rho_A(2)]}{\kappa}\min\lc \frac{(nt/2)^2}{n \expec_A[\rho^2_A(2)]},\frac{nt/2}{\rho_o^3\log n } \rc \rp.
\end{align*}
It therefore follows that, for some large enough constant $\kappa(\rho,\lambda)$ and for all $t>0$
\begin{align*}
    \prob(Z>t)\leq \kappa^\star(\rho,\lambda)\exp\lp -\frac{\expec_A[\rho_A(2)]}{\kappa}\min\lc \frac{(nt/2)^2}{n \expec_A[\rho^2_A(2)]},\frac{nt/2}{\rho_o^3\log n } \rc \rp.
\end{align*}
Consequently,
\begin{align*}
    \prob(Z>t)& \leq \kappa^\star(\rho,\lambda)\exp\lp -\frac{\expec_A[\rho_A(2)]}{\kappa}\min\lc \frac{(nt/2)^2}{n \expec_A[\rho^2_A(2)]},\frac{nt/2}{\rho_o^3\log n } \rc \rp\\
    & \leq \kappa^\star(\rho,\lambda)\exp\lp -\frac{\expec_A[\rho_A(2)]}{\kappa}\frac{n\min\{t,t^2\}}{4\log n}\min\lc \frac{1}{\expec_A[\rho^2_A(2)]},\frac{1}{\rho_o^3 } \rc \rp.
\end{align*}
Let
\begin{align*}
    \kappa(\rho,\lambda):=\frac{\expec_A[\rho_A(2)]}{4\kappa}\min\lc \frac{1}{\expec_A[\rho^2_A(2)]},\frac{1}{\rho_o^3 } \rc .
\end{align*}
It now follows that, for all $t>0$
\begin{align*}
     \prob(Z>t) \leq \kappa^\star(\rho,\lambda)\exp\lp -\frac{\kappa(\rho,\lambda)n\min\{t,t^2\}}{\log n} \rp.
\end{align*}

\subsection{Proof of Lemma \ref{lemma:talagrand}}\label{sec:prf-talagrand}

Observe from part (ii) of Theorem 4 \cite{bertail_rademacher_2019} that under Assumption \ref{assume:regenerate}, the Rademacher complexity $\Rcal(\Fcal_{[0,1]\bigcap \Qbb}))$ (as defined in definition 7 \cite{bertail_rademacher_2019}) for any class of VC functions with constant envelope $U$ and characteristic $(\kappa_1,v)$ can be upper bounded as
    \begin{align*}
        \Rcal(\Fcal_{[0,1]\bigcap \Qbb})) \leq  \kappa \lb v\kappa_\rho U\log \frac{\kappa_1\kappa_\rho U}{\sigma'}+\sqrt{vn\sigma'\log\frac{\kappa_1\kappa_\rho U}{\sigma'}} \rb+nU\exp(-\kappa_\rho\lambda/2)\kappa_\lambda,\numberthis\label{eq:lemtala-eq1}
    \end{align*}
    where $(\sigma')^2$ is any number such that 
    \begin{align*}
        \sup_{f\in \Fcal_{[0,1]\bigcap \Qbb}}\expec_A\lb\lp\sum_{i=1}^{\rho_A(2)}  f(Y_i)\rp^2\rb\leq (\sigma')^2
    \end{align*}
    and $\kappa_\rho$ is any number such that $0<\sigma'<\kappa_\rho U$.
    To continue with the proof, we first write the following lemma. Its proof is provided in \S \ref{sec:prf-halfinterval} for completeness.
\begin{lemma}\label{lemma:VC-halfinterval}
    $\Fcal_{[0,1]\bigcap \Qbb}$ is VC with constant envelope $1$ and admissible charactaristic $(\kappa,2)$ for some universal constant $\kappa>4e$.
\end{lemma}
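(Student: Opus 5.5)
To prove Lemma \ref{lemma:VC-halfinterval}, I will check three things separately: the envelope, the VC dimension of the class, and a uniform covering-number bound of the form that \citet{bertail_rademacher_2019} call a VC class with characteristic $(\kappa_1,v)$.

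\textbf{Envelope.} Every $f\in\Fcal_{[0,1]\bigcap\Qbb}$ is an indicator $\indicator[\cdot<t]$. So $0\le f\le 1$, and the constant function $U\equiv 1$ is an envelope.

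\textbf{VC dimension.} Take any single point $x$. Choosing $t\le x$ gives the label $0$ and choosing $t>x$ gives the label $1$, and rational thresholds are dense, so a single point is shattered. Now take two points $x_1<x_2$. The labelling $(0,1)$ would need $x_1\ge t$ and $x_2<t$, which is impossible. Hence the VC dimension is exactly $1$, and the subgraph class has VC index at most $2$.

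\textbf{Covering numbers.} This step needs an explicit bound rather than an appeal to Dudley's general theorem, because we need the constant. Fix a probability measure $Q$ and $\epsilon\in(0,1)$. For two thresholds $s<t$,
\[
\|\indicator[\cdot<t]-\indicator[\cdot<s]\|_{L_2(Q)}^2=Q([s,t)).
\]
The map $t\mapsto Q((-\infty,t))$ is monotone from $0$ to $1$. I choose thresholds $t_j$ at its $\epsilon^2$-quantiles, using rational approximations from the right where needed. This gives a bracketing, and hence an $\epsilon$-cover in $L_2(Q)$, with at most $1+\epsilon^{-2}$ elements. Therefore
\[
\sup_Q N(\epsilon,\Fcal_{[0,1]\bigcap\Qbb},L_2(Q))\le \frac{2}{\epsilon^2}\le\left(\frac{\kappa}{\epsilon}\right)^2
\]
for any $\kappa\ge\sqrt2$. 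Taking $\kappa>4e$ is then allowed, since enlarging $\kappa$ only weakens the bound. This gives the admissible characteristic $(\kappa,2)$ with envelope $1$.

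\textbf{Main obstacle.} The hard part is matching the exact definition of ``VC with characteristic $(\kappa_1,v)$'' in \citet{bertail_rademacher_2019}. That definition may be stated with $\epsilon\|U\|_{L_2(Q)}$ in place of $\epsilon$; since $U\equiv1$, these coincide. It may also restrict $Q$ to finitely supported measures; the quantile construction covers these as a special case. If the definition is instead phrased through the VC index, I would cite the standard bound $N\le \kappa V(4e)^V\epsilon^{-2(V-1)}$ with $V=2$, which gives exponent $2$ and a constant exceeding $4e$. This explains the requirement $\kappa>4e$ in the statement.
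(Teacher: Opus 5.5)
Your proof is correct, but the key covering step follows a different route from the paper's. The paper computes the VC dimension (growth function $n+1$, so dimension $1$). It then invokes the general VC-to-covering theorem (Theorem 2.6.4 of van der Vaart--Wellner, Theorem 7.8 of Sen), $\mathcal{N}\le \kappa\, V(4e)^V\epsilon^{-2V}$ with $V=1$, and that theorem is the source of both the unspecified universal constant and the ``$\kappa>4e$''.

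You instead use the ordered one-dimensional structure. Since $\|\indicator[\cdot<t]-\indicator[\cdot<s]\|_{L_2(Q)}^2=|G(t)-G(s)|$ with $G(t)=Q((-\infty,t))$, covering the class reduces to covering the set $\{G(t)\}\subset[0,1]$ by intervals of length $\epsilon^2$. This gives the explicit bound $1+\epsilon^{-2}\le 2\epsilon^{-2}$, uniformly in $Q$. You can avoid the quantile and right-approximation step entirely: choose one rational $t$ with $G(t)$ in each of the $\lfloor\epsilon^{-2}\rfloor+1$ intervals $[j\epsilon^2,(j+1)\epsilon^2)$ that meet the image, which gives an internal cover directly.

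Given this bound, your VC-dimension computation is not needed. It also shows that $\kappa>4e$ is an artifact of the black-box theorem. What the paper's definition actually requires is admissibility, $\kappa\ge(3\sqrt e)^2=9e$, which you meet by enlarging $\kappa$, as you note.

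Your argument is elementary and self-contained with explicit constants, but it only works for threshold classes on the line. The paper's argument carries an unspecified constant but applies unchanged to any VC class, such as the rectangles or half-spaces a multivariate extension would need.
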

    Recall from Lemma \ref{lemma:VC-halfinterval} that the class of all half intervals on rationals $\Fcal_{[0,1]\bigcap \Qbb}$ are VC with a constant envelope $U$ and characteristic $(\kappa,2)$ for some universal constant $\kappa$. Substituting this in \cref{eq:lemtala-eq1}, we get 
    \begin{align*}
        \Rcal(\Fcal_{[0,1]\bigcap \Qbb})) \leq  \kappa \lb 2\kappa_\rho\log \frac{ \kappa_\rho}{\sigma'}+\sqrt{2n\sigma'\log\frac{\kappa_\rho}{\sigma'}} \rb+n\exp(-\kappa_\rho\lambda/2)\kappa_\lambda.
    \end{align*}
    
    Next, we observe that $f(\cdot)$ are indicators of half-intervals. Hence $f(\cdot)\leq 1$ and
    \begin{align*}
        \lp\sum_{i=1}^{\rho_A(2)}  f(Y_i)\rp^2\leq \rho^2_A(2).
    \end{align*}
    Therefore, choosing $(\sigma')^2=\expec_A[\rho^2_A(2)]$ suffices and we get 
    \begin{align*}
        \Rcal(\Fcal_{[0,1]\bigcap \Qbb})) \leq  \kappa \lb 2\kappa_\rho\log \frac{ \kappa_\rho}{\sqrt{\expec_A[\rho^2_A(2)]}}+\sqrt{2n\sqrt{\expec_A[\rho^2_A(2)]}\log\frac{ \kappa_\rho}{\sqrt{\expec_A[\rho^2_A(2)]}}} \rb+n\exp(-\kappa_\rho\lambda/2)\kappa_\lambda.
    \end{align*}
    
    Finally, substituting this into Theorem 5 \cite{bertail_rademacher_2019} and trivially substituting $\log (x\kappa)\leq \kappa\log (x)$ for all large enough constant $\kappa$, we arrive at the required bound
    \begin{align*}
        \Rcal(\Fcal_{[0,1]\bigcap \Qbb})) & = 2(\expec_A[\rho_A(2)]+\expec_\nu[\rho_A(2)])+\kappa\lb \kappa_\rho\log \lp \frac{\kappa_\rho}{\sqrt{\expec_A[\rho^2_A(2)]}}\rp+\sqrt{n\expec_A[\rho^2_A(2)]\log\lp\frac{\kappa_\rho}{\sqrt{\expec_A[\rho^2_A(2)]}}\rp} \rb\\
        & \qquad +n\exp\lp -\kappa_\rho\lambda/2 \rp\kappa_\lambda.
    \end{align*}

    Now, using the exponential tail bound for the suprema of additive functions of regenerative Markov chains (Theorem 6 in \cite{bertail_rademacher_2019}, or Theorem 7 in \cite{adamczak_tail_2008}), we arrive at the conclusion.

\subsection{Proof of Lemma \ref{lemma:VC-halfinterval}}\label{sec:prf-halfinterval}
To prove this lemma, we introduce the notion of VC classes which are commonly used in nonparametric statistics \citep{sen_gentle_2018}. The function $H$ is an envelope for the function class $\mathcal{F}$ with metric $d$ if $|f(x)|\leq
H(x)$ for all $x\in E$ and $f\in\mathcal{F}$. For a metric space
$(\mathcal{F},d)$, the covering number $\mathcal{N}(\epsilon,\mathcal{F},d)$ is
the minimal number of balls of size $\epsilon$ needed to cover $\mathcal{F}
$. The metric that we use here is the
$L_{2}(Q)$-norm denoted by $\Vert.\Vert_{L_{2}(Q)} \ $ and given by $\ \Vert f\Vert_{L_{2}(Q)}=\left(\int{f^{2}dQ}\right)^{1/2}$.

\begin{definition}
A countable class $\mathcal{F}$ of measurable functions on $E\to \mathbb R$ is said to be of VC-type (or
Vapnik-Chervonenkis type) for an envelope $H$ and admissible characteristic $(\kappa,v)$
(positive constants) such that $\kappa\geq(3\sqrt
{e})^{v}$ and $v\geq1$, if for all probability measure $Q$ on $(E,\mathcal{E})$
with $0<\Vert H\Vert_{L_{2}(Q)}<\infty$ and every $0<\epsilon<1$,
\[
\mathcal{N}\left(  \epsilon\Vert H\Vert_{L_{2}(Q)},\,\mathcal{F},\,\Vert
.\Vert_{L_{2}(Q)}\right)  \leq{\kappa}{\epsilon^{-v}}.
\]
\end{definition}

\begin{proof} 
    We provide a proof for completeness. We begin this proof with some requisite definitions. Given a class of indicator functions $\Ical$ defined on $\chi$, and a set $\{x_1,\dots,x_n\}\in\chi^n$, we first define
    \begin{align*}
        \Ical(\{x_1,\dots,x_n\}):=\{ (f(x_1),\dots,f(x_n))\in\{0,1\}^n : f\in\Ical\}
    \end{align*}
    The \textit{growth function} of the $\Fcal$ is then defined as 
    \begin{align*}
        \Delta_n(\Ical) = \max_{\{x_1,\dots,x_n\}\in \chi^n} |\Ical(\{x_1,\dots,x_n\})|
    \end{align*}

The \textit{VC-dimension} of $\Ical$ is then defined as 
\begin{align*}
    VC(\Ical) := \argmax_n\lc n : \Delta_n(\Ical) = 2^n \rc
\end{align*}

We will now show that $VC(\Fcal_{[0,1]\bigcap\Qbb})=1$. Let $\{x_1,\dots,x_n\}$ be any ordered sample. That is, $x_1<x_2<,\dots,<x_n$. For any $t\in [0,1]\bigcap \Qbb$, observe that $(\indicator[x_1<t],\indicator[x_2<t],\dots,\indicator[x_n<t]) 
$ has the form $(1,1,1,\dots,1,0,0\dots,0)$. In particular, the values of $\Fcal_{[0,1]\bigcap\Qbb}(\{x_1,\dots,x_n\})$ has to be within the following set 
\begin{align*}
     & \quad \ (0,0,0,\dots,0),\\
    & \quad \ (1,0,0,\dots,0),\\
    & \quad \ (1,1,0,\dots,0),\\
    & \quad \ \vdots \\
    & \quad \ (1,1,1,\dots,1)
\end{align*}
Therefore $\Delta_n(\Fcal_{[0,1]\bigcap\Qbb}) = {n+1}$. This implies that 
\begin{align*}
 VC(\Fcal_{[0,1]\bigcap\Qbb}) & = \argmax_n \{ \Delta_n(\Fcal_{[0,1]\bigcap\Qbb})=2^n \} \\
 & = \argmax_n\{ n+1=2^n \}\\
 & = 1.
\end{align*}

Now, using standard results of covering number bounds, (Theorem 7.8 of \cite{sen_gentle_2018}, see also Theorem 2.6.4 \cite{van_der_vaart_asymptotic_2000}) we have the following result. For some universal constant $\kappa>0$
\begin{align*}
    \mathcal{N}\left(  \epsilon\Vert H\Vert_{L_{2}(Q)},\,\Fcal_{[0,1]\bigcap \Qbb},\,\Vert
.\Vert_{L_{2}(Q)}\right) & \leq \kappa\times  VC(\Fcal_{[0,1]\bigcap \Qbb})(4e)^{VC(\Fcal_{[0,1]\bigcap \Qbb})}\lp\frac{1}{\epsilon}\rp^{2VC(F_{[0,1]\bigcap \Qbb})}\\
 & \overset{(i)}{\leq}  \frac{\kappa' }{\epsilon^2}.
\end{align*}
where $(i)$ follows by substituting $VC(\Fcal_{[0,1]\bigcap\Qbb})$. This completes the proof. 
\end{proof}
\subsection{Proof of Proposition~\ref{prop:optimization}}\label{sec:prf-optimisation}

We first explain the logical construction leading to the optimization model \eqref{nonlinear_model}. We use the index $i=1,\dots, n$ to represent the time points and $l=1,\dots, L+1$ to represent the segment of the time points (i.e. $l$ represents the segment of time points between $\tau_{l-1}$ and $\tau_l$). 

Instead of determining the positions of change points directly, we introduce binary decision variables to represent whether the time point $i$ belongs to the segment $l$: 
$$z_{i,l}\in\{0,1\}\quad l=1,\dots, L+1,   i=1,\dots, n.$$

Based on the property that a time point can be assigned to only one segment, we have the constraints:
\begin{equation}\label{st_oneassign}
 \sum_{l=1}^{L+1}z_{i,l}=1\quad i=1,\dots, n.   
\end{equation}

Since the length of a segment is assumed to be at least 3, we have the constraints:
\begin{equation}\label{st_length}
   \sum_{i=1}^nz_{i,l}\geq 3\quad l=1,\dots, L+1. 
\end{equation}

Also note that if the time point $i$ is assigned to the segment $l$, then the time point $i+1$ must either remain in the segment $l$ or a later segment $l'>l$. This is represented by the constraints:
\begin{equation}\label{st_mono}
  z_{i,l}\leq \sum_{l'\geq l}z_{i+1,l'}\quad l=1,\dots, L+1,   i=1,\dots, n-1.  
\end{equation}

\begin{proof}
Assume $z_{i,l}^*$ is an optimal solution of \eqref{nonlinear_model}. We now use $z_{i,l}^*$ to represent each part of \eqref{eq:risk} as follows:
$$\sum_{p=\tau_{l-1}'}^{\tau_{l}'}\indicator[X_p\leq u] := \sum_{i=1}^n a_{u,i}z_{i,l}^*$$
$$\tau_{l}'-\tau_{l-1}':=\sum_{i=1}^n z_{i,l}^*$$
$$\hat{F}_{\tau_{l-1}'}^{\tau_{l}'}:=\frac{\sum_{i=1}^n a_{u,i}z_{i,l}^*}{\sum_{i=1}^n z_{i,l}^*}.$$
Then we have:
$$R_n(\tau_1',\dots,\tau_L'):=\sum_{l=1}^{L+1}\sum_{u=1}^{n}  \left(\sum_{i=1}^n a_{i,u}z_{i,l}^*\right)\left(1-\frac{\sum_{i=1}^n a_{i,u}z_{i,l}^*}{\sum_{i=1}^n z_{i,l}^*}\right),$$
where $(\tau_1',\dots,\tau_L')$ are the estimated change points. Therefore, we conclude that $\tau'$ is a solution of  \eqref{eq:risk} if $z^*$ is a solution of \eqref{nonlinear_model} and the objective value of \eqref{nonlinear_model} is the same as \eqref{eq:risk}.

Similarly, if $(\tau_1',\dots,\tau_L')$ are the estimated change points, we can use $\tau_l'-\tau_{l-1}'=\sum_{i=1}^n z_{i,l}^*$ to represent the length of the segments for each $l$. The segmentation implies three properties: 1) each time point can be assigned to one segment; 2) the length of a segment ($\tau_l'-\tau_{l-1}'$) is at least 3 ; and 3) if a time point belongs to a given segment, the next time point cannot be assigned to an early segment. Then if $(\tau_1',\dots,\tau_L')$ are the estimated change points,  $z^*$ should satisfy the constraints \eqref{st_oneassign}, \eqref{st_length} and \eqref{st_mono}, which means that $z^*$ is a solution of \eqref{nonlinear_model}, and \eqref{eq:risk} has the same value as the objective function of \eqref{nonlinear_model}. 
\end{proof}
\subsection{Proof of Proposition~\ref{prop:bilinear}}\label{sec:prf-optimisation_bilinear}
\begin{proof}
Suppose $z^*$ solves \eqref{nonlinear_model}. Following Equation~R1 in \cite{borrero2016simple}, we linearize the fractional terms by defining
\[
k_l^*=\frac{1}{\sum_{i=1}^n z_{i,l}^*}, \qquad \text{so that } \sum_{i=1}^n k_l^* z_{i,l}^*=1.
\]
Then
\begin{equation}\label{t_set_sum}
\begin{aligned}
f_{u,l}^* &= \frac{\sum_{i=1}^n a_{i,u}z_{i,l}^*}{\sum_{i=1}^n z_{i,l}^*}
           = \sum_{i=1}^n a_{i,u} k_l^* z_{i,l}^*, \\
d_{u,l}^* &= 1-f_{u,l}^*,
\end{aligned}
\end{equation}
and hence
\[
s_{u,l}^* = \left(\sum_{i=1}^n a_{i,u}z_{i,l}^*\right)\left(1-\frac{\sum_{i=1}^n a_{i,u}z_{i,l}^*}{\sum_{i=1}^n z_{i,l}^*}\right)
          = \sum_{i=1}^n a_{i,u} z_{i,l}^* d_{u,l}^*.
\]
Thus $(z^*,k^*,f^*,d^*,s^*)$ is feasible for \eqref{linear_model} with the same objective value.

Conversely, if $(z^*,k^*,f^*,d^*,s^*)$ is feasible for \eqref{linear_model}, then $z^*\in\mathcal{Z}$ and satisfies all constraints of \eqref{nonlinear_model}. Moreover,
\[
\sum_{l=1}^{L+1}\sum_{u=1}^n s_{u,l}^* 
  = \sum_{l=1}^{L+1}\sum_{u=1}^n\left(\sum_{i=1}^n a_{i,u}z_{i,l}^*\right)\left(1-\frac{\sum_{i=1}^n a_{i,u}z_{i,l}^*}{\sum_{i=1}^n z_{i,l}^*}\right),
\]
which coincides with the objective of \eqref{nonlinear_model}. Therefore, $z^*$ solves \eqref{nonlinear_model} if and only if $(z^*,k^*,f^*,d^*,s^*)$ solves \eqref{linear_model}, with equal objective values. This completes the proof.
\end{proof}

\end{document}

%% file: commands.tex
\usepackage{times}
\usepackage{tikz}
\usepackage{authblk}
\usepackage[english]{babel}
\usepackage{amssymb}
\usepackage[margin=1in]{geometry}
\usepackage{lineno}
\usepackage{calligra}

\usepackage{amsmath}
\usepackage{amssymb}
\usepackage{graphicx}
\usepackage[colorlinks=true, allcolors=blue]{hyperref}
\usepackage{verbatim}
\usepackage{cleveref}  
\usepackage{enumitem}
\usepackage{bbm}
\usepackage{csquotes}

\numberwithin{equation}{section} 
\newcommand{\lv}{\left|}
\newcommand{\rv}{\right|}

\newcommand{\ldot}{\left.}
\newcommand{\rdot}{\right.}

\newcommand{\Var}{\mathrm{Var}}

\DeclareMathAlphabet{\mathbcal}{U}{dutchcal}{m}{n}

\newcommand{\Cbb}{\mathbb{C}}

\newcommand{\Pbb}{\mathbb{P}}
\newcommand{\Qbb}{\mathbb{Q}}
\newcommand{\Rbb}{\mathbb{R}}

\newcommand{\Bcal}{\mathcal{B}}
\newcommand{\Ccal}{\mathcal{C}}

\newcommand{\Ecal}{\mathcal{E}}
\newcommand{\Fcal}{\mathcal{F}}
\newcommand{\Gcal}{\mathcal{G}}
\newcommand{\Hcal}{\mathcal{H}}
\newcommand{\Ical}{\mathcal{I}}

\newcommand{\Ocal}{\mathcal{O}}

\newcommand{\Scal}{\mathcal{S}}
\newcommand{\Rcal}{\mathcal{R}}
\newcommand{\Tcal}{\mathcal{T}}

\newcommand{\Xcal}{\mathcal{X}}

\newcommand{\ocal}{\mathbcal{o}}

\newcommand{\scal}{\mathbcal{s}}

\newcommand{\constant}{\Cbb}

\newcommand\numberthis{\addtocounter{equation}{1}\tag{\theequation}}

\newcommand{\whiteqed}{\hfill$\square$\par\bigskip}
\newcommand{\prob}{\Pbb}
\newcommand{\expec}{\mathbb{E}}
\newcommand{\probl}{\Pbb}

\newcommand{\indicator}{\mathbbm{1}}

\newcommand{\beq}{\begin{eqnarray*}}
\newcommand{\eeq}{\end{eqnarray*}}
\newcommand{\beqn}{\begin{eqnarray}}
\newcommand{\eeqn}{\end{eqnarray}}
\newcommand{\ben}{\begin{enumerate}}
\newcommand{\een}{\end{enumerate}}
\newcommand{\bit}{\begin{itemize}}
\newcommand{\eit}{\end{itemize}}
\newcommand{\hide}[1]{}

\newcommand{\eps}{\varepsilon}
\newcommand{\vertiii}[1]{{\left\vert\kern-0.25ex\left\vert\kern-0.25ex\left\vert #1 
    \right\vert\kern-0.25ex\right\vert\kern-0.25ex\right\vert}}

\renewcommand{\epsilon}{\eps}

\newcommand{\density}{\scal}

\newtheorem{definition}{Definition}
\newtheorem{lemma}{Lemma}
\newtheorem{remark}{Remark}
\newtheorem{proposition}[lemma]{Proposition}
\newtheorem{theorem}{Theorem}
\newtheorem{corollary}{Corollary}
\newtheorem{assumption}{Assumption}